\newtheorem{theorem}{Theorem}
\newtheorem{proposition}[theorem]{Proposition}
\newtheorem{lemma}[theorem]{Lemma}
\theoremstyle{definition}
\theoremstyle{remark}
\numberwithin{theorem}{section}
\numberwithin{proposition}{section}
\numberwithin{lemma}{section}
\numberwithin{corollary}{section}
\numberwithin{definition}{section}
\numberwithin{remark}{section}
\numberwithin{example}{section}
\newcommand{\be}{\begin{equation}}
\newcommand{\en}{\end{equation}}
\newcommand{\ben}{\begin{equation*}}
\newcommand{\enn}{\end{equation*}}
\newcommand{\bea}{\begin{eqnarray}}
\newcommand{\ena}{\end{eqnarray}}
\begin{document}
	
	\newlength\tindent
	\setlength{\tindent}{\parindent}
	\setlength{\parindent}{0pt}
	\renewcommand{\indent}{\hspace*{\tindent}}
	
	\begin{savenotes}
		\title{
			\bf{ 
				Additive normal tempered stable  processes \\
				for equity derivatives and power law scaling
		}}
		\author{
			Michele Azzone$^\ddagger$ \& 
			Roberto Baviera$^\ddagger$ 
		}
		
		\maketitle
		
		\vspace*{0.11truein}
		\begin{tabular}{ll}
			$(\ddagger)$ &  Politecnico di Milano, Department of Mathematics, 32 p.zza L. da Vinci, Milano \\
		\end{tabular}
	\end{savenotes}
	
	\vspace*{0.11truein}
	\begin{abstract}
		\noindent
		We introduce a simple additive process for equity index derivatives.
		The model generalizes L\'evy Normal Tempered Stable processes (e.g. NIG and VG) with time-dependent parameters. 
		It accurately fits the equity index volatility surfaces in the whole time range of quoted instruments, including options with small time-horizon (days) and long time-horizon (years).
		
		\noindent
		We introduce the model via its characteristic function. This allows using classical Fourier pricing techniques. 
		We discuss the calibration issues in detail and
		we show that,  in terms of mean squared error, calibration is on average two orders of magnitude better than both L\'evy  and Sato processes alternatives.
		
		\noindent
		We show that even if the model loses the classical stationarity property of L\'evy processes, it presents interesting scaling properties for the calibrated parameters.
	\end{abstract}
	
	\vspace*{0.11truein}
	{\bf Keywords}: 
	Additive process, volatility surface, calibration.
	\vspace*{0.11truein}
	
	{\bf JEL Classification}: 
	C51, 
	G13. 
	\vspace*{1cm}
	\begin{flushleft}
		Cite as: Azzone, A. \& Baviera, R. (2021). 	Additive normal tempered stable  processes \\
		for equity derivatives and power law scaling. \textit{Quantitative Finance}, Published online. \\
	\end{flushleft}
	\vspace{3cm}
	\begin{flushleft}
		{\bf Address for correspondence:}\\
		Roberto Baviera\\
		Department of Mathematics \\
		Politecnico di Milano\\
		32 p.zza Leonardo da Vinci \\ 
		I-20133 Milano, Italy \\
		Tel. +39-02-2399 4630\\
		Fax. +39-02-2399 4621\\
		roberto.baviera@polimi.it
	\end{flushleft}

	\newpage
	
	\begin{center}
		\Large\bfseries 
		Additive normal tempered stable processes \\
		for equity derivatives and power law scaling
	\end{center}
	
	\vspace*{0.21truein}
	
	\section{Introduction}
	
	Following the seminal work of \citet{MadanSeneta1990}, L\'evy processes have become a powerful modeling solution that provides parsimonious models consistent with option prices and with underlying asset prices. 
	There are several advantages of this modeling approach: 
	this model class admits a simple closed formula \citep{carr1999option, lewis2001simple} and it is a parsimonious description of some key features of the market volatility surface. 
	In particular, the class of L\'evy normal tempered stable processes (LTS) appears to be rather flexible and it involves very few parameters. LTS are pure jump\footnote{The relevance of pure jump dynamics in the equity and commodity asset classes has been discussed in the recent literature, see, e.g. \citet[][]{ornthanalai2014levy,li2014time,ballotta2018smiles}.} processes, obtained via the well-established L\'evy subordination technique  \citep[see, e.g.][]{Cont, Schoutens}.
	Specifically, most of these applications involve two processes in the LTS family: Normal Inverse Gaussian (NIG) \citep{barndorff1997normal} and Variance Gamma (VG) \citep{madan1998variance}, which are obtained via two different L\'evy subordinators. Both NIG and VG are characterized by three parameters: 
	$\sigma$, which controls the average level of the volatility surface;
	$k$, which is related to the convexity of the implied volatility surface;  and $\eta$, which is linked to the volatility \textit{skew} \citep[for a definition see, e.g.] [Ch.3, p.35]{gatheral2011volatility}. 
	
	\smallskip
	
	Unfortunately, the recent literature
	has shown that these models do not reproduce the implied volatilities that are observed in the market data at different time horizons with sufficient precision
	\citep[see, e.g.][Ch.14]{Cont}.
	L\'evy normal tempered stable processes are pure jump models with independent and stationary increments.
	The key question is as follows: is it reasonable to consider stationary increments when modeling implied volatility?
	Jump stationarity is a feature that significantly simplifies the model's characteristics but it is rather difficult to justify {\it a priori} from a financial point of view.
	For example, a {\it market maker} in the option market does not consider the consequences of a jump to be equivalent on options with different maturities.
	He cares about the amount of trading in the underlying required to replicate the option after a jump arrival.
	The impact of such a jump on the hedging policy is inhomogeneous with option maturity.\footnote{Gamma is the Greek measure that quantifies the amount of this hedging and, generally, it decreases with time-to-maturity.}
	Hence, {\it a priori},
	it is not probable that a stationary model can adequately describe implied volatilities. 
	
	\smallskip
	
	Additive processes have been proposed to overcome this problem. Additive processes are an extension of L\'evy processes that consider independent but not stationary increments. 
	Given an additive process, for every fixed time $t$, it is always possible to define a L\'evy process that at time $t$ has the same law as the additive process. 
	This feature allows us to maintain several properties (both analytical and numerical) of the L\'evy processes. 
	
	
	The probability description of additive processes is well-established \citep{Sato} but the applications in quantitative finance are relatively few. 
	A first application of additive processes to option pricing is developed by \citet{carr2007self}, 
	who investigate Sato processes \citep{sato1991self} in derivative modeling \citep[see also][]{eberlein2009sato}. 
	\citet{benth2012risk} use additive processes, which they call time-inhomogeneous Levy processes, in the electricity market. In their paper, the electricity spot price is characterized by Ornstein-Uhlenbeck processes, which are driven by additive processes.
	More recently, \citet{li2016Additive} have considered a larger class of additive processes. Their paper studies additive subordination, which (they show) is a useful technique for constructing time inhomogeneous Markov processes with an analytically tractable characteristic function. This technique is a natural generalization of L\'evy subordination.
	
	\bigskip
	We introduce a new class of (pure-jump) additive processes through their characteristic function which are named additive normal tempered stable (ATS) processes.
	ATS processes (in general) cannot be obtained via a time-change as in the additive subordination of \citet{li2016Additive} and are not  Sato processes. There is a subclass of ATS obtained via additive subordination and a subclass of Sato ATS process.
	The main advantage of this new class of models is the possibility to ``exactly" calibrate the term structure of observed implied volatility surfaces, while 
	maintaining the parsimony of LTS.

	We provide a calibration example of the ATS on the S\&P 500 and EURO STOXX 50 implied volatility surfaces of the $30^{th}$ of May 2013. The  ATS calibration is on average two orders of magnitude better than the corresponding LTS in terms of mean squared error. 
	We show that the calibrated time-dependent parameters present an interesting and statistically relevant self-similar behavior compatible with a power-law scaling subcase of ATS.
	Moreover, we have verified that these results are robust over time.
	
	\bigskip
	
	The main contributions of this paper are threefold. First,  we introduce a new broad family of additive processes, which we call additive normal tempered stable (ATS) processes.\\ 
	Second, we calibrate the ATS processes on S\&P 500 and EURO STOXX 50 volatility surfaces. We show that ATS  have better calibration features 
	(in terms of both the Mean Squared Error and the Mean Absolute Percentage Error) than LTS and Sato processes.\\
	Finally, we consider a re-scaled ATS process via a time-change based on the implied volatility term structure. 
	We show that the calibrated parameters exhibit a self-similar behavior w.r.t. the new time. The statistical relevance of these scaling properties is verified. 
	
	\smallskip
	
	The rest of the paper is organized as follows. 
	In Section {\bf \ref{section:Themodel}}, we introduce the model: we prove that there exists a  new family of additive processes as the natural extension of the corresponding L\'evy processes. 
	In Section {\bf \ref{section:calibration}},
	we describe the dataset used in the calibration, the calibration results for ATS, LTS, and Sato processes
	and an interesting scaling property of the calibrated parameters.
	In Section {\bf \ref{section:AdditionalResult}}, we show that LTS and Sato processes fail to reproduce some stylized facts observed in market data, which are adequately described by ATS processes and we present a robustness analysis.
	Finally,
	Section {\bf \ref{section:conclusions}} concludes.
	
	\section{The model}
	\label{section:Themodel}
	In this Section, we introduce the ATS process, 
	as a natural extension of the LTS process, that, on the one hand, maintains the increments' independence as in the corresponding Levy process, and, on the other hand, allows for time-inhomogeneous parameters. 
	First, we prove a sufficient condition for the existence of ATS processes (\textbf{Theorem \ref{theorem:f_Additive}}) and 
	the martingale property for the corresponding forward process (\textbf{Proposition \ref{th6}}).  
	Then, we introduce the power-law scaling ATS as a subcase of a generic ATS (\textbf{Theorem \ref{theorem:semplified_f}}): 
	we show in the next Section that this model describes accurately the implied volatility surface.
	Finally, we prove a key model feature: the model allows to reproduce 
	a generic volatility term structure (\textbf{Proposition \ref{theorem:NewAdditive}}).
	\bigskip
	
	L\'evy normal tempered stable processes (LTS) are commonly used in the financial industry for derivative pricing.
	According to this modeling approach,
	the forward  with expiry $T$  is an exponential L\'evy; i.e. 
	\be
	F_t (T) := F_0 (T) \; \exp(f_t) \;, 
	\label{eq:FwdBasic}
	\en
	with $f_t$ a LTS 
	\begin{equation*}
	f_t =  - \left(  \frac{1}{2}+\eta \right) \; \sigma^2 \; S_t +  \sigma \; W_{S_t}+ \varphi \,t\qquad \forall t \in [0,T] \; ,
	\end{equation*}
	where 
	$\eta, \sigma$ are two real parameters ($\eta \in \mathbb{R}, \sigma \in \mathbb{R}^+$), 
	while 
	$\varphi$ is obtained by imposing the martingale condition on 
	$F_t (T)$.\footnote{A parametrization scheme of the drift in terms of $\eta$ can be suitable for some applications: $\eta$ controls the volatility \textit{skew}.
		In particular, it can be proven that for $\eta =0$ the smile is symmetric, i.e. the implied volatility \textit{skew} is zero
		\citep[see, e.g.][Prop. p.21]{baviera2007}.
	}
	$W_t$ is a Brownian motion and $S_t$ is a L\'evy tempered stable subordinator independent from the Brownian motion with variance per unit of time $k$. Examples of LTS subordinators are the Inverse Gaussian process for NIG or the Gamma process for VG.

	It is possible to write the characteristic function of  $f_t$ as 
	\begin{equation}
	\label{laplace_levy}
	\mathbb{E}\left[e^{iuf_t}\right]={\cal L}_t \left(iu \left(\frac{1}{2}+\eta\right)\sigma^2+\frac{u^2\sigma^2}{2}; \;k,\;\alpha\right)e^{iu\varphi \, t}\;\;,
	\end{equation}
	where $\alpha\; \in\; [0,1)$ is the LTS \textit{index of stability} and ${\cal L}_t$ is the Laplace transform of $S_t$ \begin{equation}
 \ln {\cal L}_t \left(u;\;k,\;\alpha\right) :=
\begin{cases} 
	\displaystyle \frac{t}{k}
	\displaystyle \frac{1-\alpha}{\alpha}
	\left \{1-		\left(1+\frac{u \; k}{1-\alpha}\right)^\alpha \right \} & \mbox{if } \; 0< \alpha < 1 \\[4mm]
	\displaystyle -\frac{t}{k}
	\ln \left(1+u \; k\right)  & \mbox{if } \; \alpha = 0 \end{cases}\;\; . \label{eq:lap_transf}
	\end{equation}
	This theory is well known and can be found in many excellent textbooks  \citep[see, e.g.][]{Cont, Schoutens}.
	
	\bigskip
	
	As already discussed in the Introduction, LTS processes do not properly describe short and long maturities at the same time, 
	while they allow an excellent calibration for a fixed maturity.
	For this reason, we would like to select a process that allows independent but non-stationary increments: i.e. an additive process. The simplest way to obtain this modeling feature is to consider an additive process with a characteristic function of the same form of (\ref{laplace_levy}) but with time-dependent parameters
	
	\begin{equation}
	\label{laplace}
	\mathbb{E}\left[e^{iuf_t}\right]={\cal L}_t \left(iu \left(\frac{1}{2}+\eta_t \right)\sigma_t^2+\frac{u^2\sigma^2_t}{2};\;k_t,\;\alpha \right)e^{iu\varphi_tt}\;\;,
	\end{equation}
	where $ \sigma_t $, $k_t$ are continuous on $[0,\infty)$ and $ \eta_t $, $\varphi_t$ are continuous  on $(0,\infty)$
	with $ \sigma_t > 0$, $ k_t \geq  0$ and  $\varphi_t \,t$ goes to zero as $t$ goes to zero. $\alpha \in [0,1)$ as in the LTS case.
	
	In \textbf{Theorem \ref{theorem:f_Additive}}, we prove that this process exists if some conditions on $\sigma_t$, $\eta_t$ 
	and $k_t$ are satisfied.
	
	\bigskip

	An additive process is a c\'adl\'ag stochastic process on $\mathbb{R}$	
	$\left\{X_t\right\}_{t\geq 0}$, with $X_0=0$ a.s. and characterized 
	by independent increments and stochastic continuity
	\citep[see, e.g.][Def.14.1 p.455]{Cont}.
	It can be proven that the distribution of an additive process at time $t$ is infinitely divisible.  $(A_t,\nu_t,\gamma_t)$ is the generating triplet that characterizes the additive process $ \left\{{X_t}\right\}_{t\geq 0}$. $A_t$, $\nu_t$ and $\gamma_t$ are called respectively the diffusion term, the L\'evy measure and the drift term \citep[see][pp.38-39]{Sato}.\footnote{In this paper, the notation follows closely the one in \citet{Sato}.} 
	
	\bigskip
	

	\citet[][Th.9.8,  p.52]{Sato} proves a powerful link between a system of infinitely divisible probability distributions and the existence of an additive process. 
	In particular, Sato requires two main classes of conditions on the generating triplet: i) some conditions of monotonicity, necessary to avoid  {meaningless} negative diffusion term or negative L\'evy measure for the process increments and ii) some continuity conditions, in order to obtain the stochastic continuity.
	
	We use \citet[][Th.9.8, p.52]{Sato} to prove the main theoretical results in this paper: there exists a family of additive processes with characteristic function (\ref{laplace}).
	\begin{theorem} {\bf Sufficient conditions for existence of ATS} \label{theorem:f_Additive}\\ 
		There exists an additive process $ \left\{f_t \right\}_{t\geq 0}$ with the characteristic function  (\ref{laplace}) if the following two conditions hold.
		\begin{enumerate}
			\item $g_1(t)$, $g_2(t)$, and $g_3(t)$ are non decreasing, where  
			\begin{align*}
			g_1(t)&:=(1/2+\eta_t)-\sqrt{\left(1/2+\eta_t\right)^2+ 2(1-\alpha)/(\sigma_t^2 k_t)}\\
			g_2(t)&:=-(1/2+\eta_t)-\sqrt{\left(1/2+\eta_t\right)^2+ 2(1-\alpha)/( \sigma_t^2 k_t)}\\
			g_3(t)&:=\frac{ t^{1/\alpha}\sigma^2_t}{k_t^{(1-\alpha)/\alpha}}\sqrt{\left(1/2+\eta_t\right)^2+ 2(1-\alpha)/( \sigma_t^2k_t)}\;\;;
			\end{align*}
			\item Both  $t\,\sigma_t^2\,\eta_t$ and $t\,\sigma_t^{2\alpha}\,\eta_t^\alpha\,/k_t^{1-\alpha}$ go to zero as $t$ goes to zero.
		\end{enumerate} 
		
	\end{theorem}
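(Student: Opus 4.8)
The plan is to invoke the correspondence of \citet[][Th.~9.8]{Sato} quoted above: it suffices to exhibit, for each $t\ge 0$, the generating triplet $(A_t,\nu_t,\gamma_t)$ of the law $\Pi_t$ whose characteristic function is (\ref{laplace}), and then to verify conditions~1--3 for this family of triplets. The starting observation is that, since $\ln\mathcal{L}_t(z;k,\alpha)=t\,\ln\mathcal{L}_1(z;k,\alpha)$, the right-hand side of (\ref{laplace}) equals $\exp\!\big(t\,\psi_t(u)\big)$ with $\psi_t(u):=\ln\mathcal{L}_1\!\big(iu(1/2+\eta_t)\sigma_t^2+u^2\sigma_t^2/2;\,k_t,\alpha\big)+iu\,\varphi_t$; hence $\Pi_t$ is exactly the time-$t$ marginal of the L\'evy normal tempered stable process with the \emph{frozen} parameters $(\sigma_t,k_t,\eta_t)$. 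In particular $\Pi_t$ is infinitely divisible; it is pure jump, since the tempered stable subordinator has no drift and no Gaussian part, and the same is inherited by the subordinated Brownian motion, so that $A_t\equiv 0$; and $\nu_t=t\,\bar\nu(\,\cdot\,;\sigma_t,k_t,\eta_t,\alpha)$, where $\bar\nu$ denotes the unit-time normal tempered stable L\'evy measure, while $\gamma_t$ is read off from $t\,\psi_t(u)$ via the L\'evy--Khintchine formula (with truncation $x\,\mathbbm{1}_{|x|\le 1}$ when $\alpha\ge 1/2$).

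The heart of the proof is condition~2, i.e.\ $\nu_s(B)\le\nu_t(B)$ for every Borel $B$ and $0\le s\le t$ (the statement $A_s\le A_t$ being void); since $\nu_t$ is absolutely continuous this is equivalent to $\nu_s(x)\le\nu_t(x)$ for a.e.\ $x$. I would first carry out the subordination integral explicitly: for $\alpha\in(0,1)$, writing $\lambda_t:=\sqrt{(1/2+\eta_t)^2+2(1-\alpha)/(\sigma_t^2k_t)}$ and $\beta:=1/2+\alpha$, and recognising the modified-Bessel integral $\int_0^\infty s^{\nu-1}e^{-\frac z2(s+1/s)}ds=2K_\nu(z)$, one obtains
\begin{equation*}
\nu_t(x)\;=\;c_\alpha\,\Big(\frac{g_3(t)}{\lambda_t}\Big)^{\!\alpha}\,
\frac{e^{-(1/2+\eta_t)\,x}}{|x|^{1+2\alpha}}\,\big(\lambda_t|x|\big)^{\beta}K_{\beta}(\lambda_t|x|),
\qquad c_\alpha>0 .
\end{equation*}
I would then use the reparametrisation $\lambda_t=-(g_1(t)+g_2(t))/2$ and $1/2+\eta_t=(g_1(t)-g_2(t))/2$: the prefactor $(g_3/\lambda_t)^\alpha$ is non-decreasing because $g_3$ is non-decreasing and $\lambda_t$ is non-increasing, and, using $\tfrac{d}{dz}\big(z^\beta K_\beta(z)\big)=-z^\beta K_{\beta-1}(z)$, a direct differentiation gives
\begin{equation*}
\partial_t\log\nu_t(x)=\frac{\alpha\,\dot g_3}{g_3}
+\frac{\dot g_1+\dot g_2}{2}\Big(\frac{\alpha}{\lambda_t}+|x|\,\frac{K_{\beta-1}(\lambda_t|x|)}{K_{\beta}(\lambda_t|x|)}\Big)
-\frac{x\,(\dot g_1-\dot g_2)}{2},
\end{equation*}
so that the coefficients multiplying $\dot g_1,\dot g_2,\dot g_3\ge 0$ are all non-negative, for every $x$, once one establishes $z\big(1-K_{\beta-1}(z)/K_\beta(z)\big)\le\alpha$ for all $z>0$, equivalently $K_{\nu+1}(z)/K_\nu(z)\ge 1+(\nu+\tfrac12)/z$ with $\nu=\beta>\tfrac12$. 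I expect this Bessel inequality to be \emph{the main obstacle}; I would prove it by a comparison argument for the Riccati equation $h'=h^2-\tfrac{2\nu+1}{z}h-1$ satisfied by $h(z)=K_{\nu+1}(z)/K_\nu(z)$: the candidate bound $\ell(z)=1+(\nu+\tfrac12)/z$ is a supersolution, since $\ell'-\big(\ell^2-\tfrac{2\nu+1}{z}\ell-1\big)=\tfrac{4\nu^2-1}{4z^2}\ge 0$ for $\nu\ge\tfrac12$, it is attained to leading order as $z\to\infty$ with $h>\ell$ there, and the comparison principle then forces $h\ge\ell$ on all of $(0,\infty)$. If $g_1,g_2,g_3$ are assumed only non-decreasing and not $C^1$, I would rerun the computation on difference quotients or argue directly from the monotonicity of the reparametrised quantities. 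The case $\alpha=0$ (Variance Gamma) is handled in the same way and is simpler: $\bar\nu$ is a bilateral exponential density, $\nu_t(x)=\tfrac{t}{k_t|x|}e^{g_2(t)x}$ for $x>0$ and $\nu_t(x)=\tfrac{t}{k_t|x|}e^{g_1(t)|x|}$ for $x<0$, so monotonicity follows from $t/k_t$ (the $\alpha\downarrow 0$ limit of $g_3^{\alpha}$) together with $g_1,g_2$ being non-decreasing.

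It remains to verify conditions~1 and~3. On $(0,\infty)$ condition~3 is a continuity statement that follows from the hypotheses that $\sigma_t,k_t$ are continuous on $[0,\infty)$ and $\eta_t,\varphi_t$ on $(0,\infty)$: $A_s\to A_t$ trivially, $\nu_s(B)\to\nu_t(B)$ for $B$ bounded away from the origin by dominated convergence in the explicit density above, and $\gamma_s\to\gamma_t$ because $\psi_s(u)\to\psi_t(u)$ and the corresponding truncated integrals converge. Condition~1 ($A_0=\nu_0=0$, $\gamma_0=0$) together with the $t\downarrow 0$ part of condition~3 is equivalent to $\Pi_t\Rightarrow\delta_0$, i.e.\ $t\,\psi_t(u)\to 0$ as $t\to 0$. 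Expanding $\ln\mathcal{L}_1$, the contributions that could fail to vanish are exactly those controlled by hypothesis~2: the term linear in the argument of $(\cdot)^\alpha$ (relevant because $(1+z)^\alpha\approx 1+\alpha z$) produces $t\,\sigma_t^2\eta_t\to 0$, the term coming from $(\cdot)^\alpha$ at large argument produces $t\,\sigma_t^{2\alpha}\eta_t^\alpha/k_t^{1-\alpha}\to 0$, and the drift factor is killed by $\varphi_t t\to 0$; hence $\Pi_0=\delta_0$, $\nu_0=0$, $\gamma_0=0$, and $\nu_s\to 0$, $\gamma_s\to 0$ as $s\to 0$. With conditions~1--3 in hand, \citet[][Th.~9.8]{Sato} delivers an Additive process $\{f_t\}_{t\ge 0}$ with characteristic function (\ref{laplace}).
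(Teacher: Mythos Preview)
Your proposal is correct and follows the same overall plan as the paper---invoke Sato's Th.~9.8 after identifying the triplet of the frozen-parameter LTS---but the monotonicity step is handled quite differently. The paper never differentiates $\nu_t$ and needs no Bessel inequality: it inserts the integral representation $K_\beta(z)=\frac{e^{-z}}{\Gamma(\beta+\frac12)}\sqrt{\frac{\pi}{2z}}\int_0^\infty e^{-s}s^{\alpha}\big(1+\tfrac{s}{2z}\big)^{\alpha}\,ds$ into $\nu_t(x)$ and then isolates just two $t$-dependent factors, the exponential $e^{-(1/2+\eta_t)x-|x|\lambda_t}$ (equal to $e^{|x|g_1}$ or $e^{|x|g_2}$ according to the sign of $x$) and, inside the integral, the quantity $\frac{t^{1/\alpha}\sigma_t^2}{k_t^{(1-\alpha)/\alpha}}\big(\tfrac{s}{2|x|}+\lambda_t\big)$. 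The second is $g_3(t)$ plus $\tfrac{s}{2|x|}$ times the prefactor $t^{1/\alpha}\sigma_t^2/k_t^{(1-\alpha)/\alpha}$; since $g_1+g_2=-2\lambda_t$ non-decreasing makes $\lambda_t$ non-increasing, and $g_3=\text{(prefactor)}\cdot\lambda_t$ non-decreasing then forces the prefactor non-decreasing. Hence the integrand is, for every $s$, a product of non-decreasing functions of $t$, and $\nu_t(x)\uparrow$ follows without any special-function estimate. Your route via $\partial_t\log\nu_t$ and the inequality $K_{\nu+1}(z)/K_\nu(z)\ge 1+(\nu+\tfrac12)/z$ for $\nu\ge\tfrac12$ is valid (with equality exactly at $\nu=\tfrac12$), but note that calling $\ell$ a supersolution and invoking ``the comparison principle'' is slightly misleading: the standard comparison would give $\ell\ge h$, not $h\ge\ell$; what actually works is the crossing argument (at any touching point $h'=F(z,\ell)<\ell'$ strictly, so $h$ could only cross downward, contradicting $h>\ell$ at both $z\to0$ and $z\to\infty$). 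For the $t\downarrow0$ part the paper also proceeds directly on the density---a case analysis gives $\nu_t(x)\to0$, and a short lemma bounds the truncated-mean part of $\gamma_t$ by $t\sigma_t^2|1/2+\eta_t|$---whereas your characteristic-function shortcut is fine provided you cite the continuity theorem for infinitely divisible triplets (e.g.\ Sato Th.~8.7) to pass from $\Pi_t\Rightarrow\delta_0$ to the required convergence of $\nu_t(B)$ and $\gamma_t$.
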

	\begin{proof}
		See Appendix A
	\end{proof}
	
	Let us emphasize that
	the conditions of \textbf{Theorem \ref{theorem:f_Additive}} are quite general. In the market, we observe only a limited number of maturities, and thus,  there is a large set of functions of time that reproduce market data and satisfy the conditions of the theorem. Furthermore, we prove in  \textbf{Theorem \ref{theorem:semplified_f}}, that the conditions of \textbf{Theorem \ref{theorem:f_Additive}} are satisfied by a simple sub-case of ATS with power-law scaling $\eta_t$ and $k_t$ and constant sigma. 
	Finally, we also prove in \textbf{Proposition \ref{theorem:NewAdditive}} 
	that ATS models allow a generic volatility term structure $\sigma_t$.
	
	\bigskip
	In a similar way to the LTS case, it is possible to consider
	a forward price $F_t (T)$ (\ref{eq:FwdBasic}) as the exponential of the ATS process $ \left\{f_t \right\}_{t\geq 0}$ with the characteristic function  (\ref{laplace}).
	The deterministic function of time $\varphi_t $ 
	can be chosen s.t. the process $F_t (T)$ satisfies the martingale property, as shown in the next proposition.
	
	\begin{proposition}
		{\bf Martingale property} \label{th6}\\
		The forward $\left\{F_t (T)\right\}_{t\geq 0}$,
		modeled via an exponential additive process characterized by an  
		ATS process	$\left\{f_t\right\}_{t\geq 0}$
		is a martingale
		if and only if 
		\be
		\varphi_t\; t=-\ln {\cal L}_t\left(\sigma^2_t \eta_t;\;k_t,\;\alpha\right)  \;\;,
		\label{eq:drift}
		\en
				where ${\cal L}_t$ is the Laplace transform in (\ref{eq:lap_transf}).
	\end{proposition}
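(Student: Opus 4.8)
The plan is to exploit that, by \textbf{Theorem \ref{theorem:f_Additive}}, $\left\{f_t\right\}_{t\geq 0}$ is an Additive process with $f_0=0$ a.s., hence a process with independent increments, and to reduce the martingale property of $F_t(T)=F_0(T)\exp(f_t)$ to the single moment condition $\mathbb{E}\!\left[e^{f_t}\right]=1$ for every $t\geq 0$. First I would record this equivalence. If $\left\{F_t(T)\right\}_{t\geq 0}$ is a martingale (with respect to the natural filtration $\left\{\mathcal{F}_t\right\}$ of $f$) then it has constant expectation, so $F_0(T)\,\mathbb{E}\!\left[e^{f_t}\right]=\mathbb{E}\!\left[F_t(T)\right]=\mathbb{E}\!\left[F_0(T)\right]=F_0(T)$, i.e.\ $\mathbb{E}\!\left[e^{f_t}\right]=1$. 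Conversely, independence of the increments gives, for $0\leq s\leq t$, $\mathbb{E}\!\left[e^{f_t}\right]=\mathbb{E}\!\left[e^{f_s}\right]\mathbb{E}\!\left[e^{f_t-f_s}\right]$ and $\mathbb{E}\!\left[F_t(T)\mid\mathcal{F}_s\right]=F_0(T)\,e^{f_s}\,\mathbb{E}\!\left[e^{f_t-f_s}\right]$; assuming $\mathbb{E}\!\left[e^{f_u}\right]=1$ (in particular finite) for all $u$, the first identity forces $\mathbb{E}\!\left[e^{f_t-f_s}\right]=1$ and the second then yields $\mathbb{E}\!\left[F_t(T)\mid\mathcal{F}_s\right]=F_s(T)$, so $F_t(T)$ is an integrable martingale.

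Next I would compute $\mathbb{E}\!\left[e^{f_t}\right]$ explicitly from the characteristic function (\ref{laplace}). The one-dimensional law of $f_t$ is the infinitely divisible law whose characteristic function is (\ref{laplace}); equivalently, it is the law of $\mu_t S_t+\sigma_t W_{S_t}+\varphi_t t$ with $\mu_t=-(1/2+\eta_t)\sigma_t^2$, $W$ a Brownian motion and $S_t$ a tempered stable variable with Laplace transform ${\cal L}_t(\,\cdot\,;k_t,\alpha)$. Continuing (\ref{laplace}) analytically to $u=-i$ --- equivalently, conditioning on $S_t$ --- gives
\[
\mathbb{E}\!\left[e^{f_t}\right]=e^{\varphi_t t}\,\mathbb{E}\!\left[e^{\left(\mu_t+\sigma_t^2/2\right)S_t}\right]=e^{\varphi_t t}\,{\cal L}_t\!\left(-\left(\mu_t+\tfrac{\sigma_t^2}{2}\right);\,k_t,\,\alpha\right)=e^{\varphi_t t}\,{\cal L}_t\!\left(\sigma_t^2\eta_t;\,k_t,\,\alpha\right),
\]
using $\mu_t+\sigma_t^2/2=-\sigma_t^2\eta_t$. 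This step is legitimate exactly when $\sigma_t^2\eta_t$ lies in the domain of ${\cal L}_t(\,\cdot\,;k_t,\alpha)$, i.e.\ $1+\sigma_t^2\eta_t k_t/(1-\alpha)\geq 0$ for $0<\alpha<1$ (resp.\ $1+\sigma_t^2\eta_t k_t>0$ for $\alpha=0$); this is precisely the condition for $e^{f_t}$ to be integrable and, simultaneously, for the right-hand side of (\ref{eq:drift}) to be a finite real number, so outside this range $F_t(T)$ is not integrable and the right-hand side of (\ref{eq:drift}) is not real, consistently with the stated equivalence.

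Finally, combining the two steps, $\left\{F_t(T)\right\}_{t\geq 0}$ is a martingale if and only if $e^{\varphi_t t}\,{\cal L}_t(\sigma_t^2\eta_t;k_t,\alpha)=1$ for all $t\geq 0$, that is, taking logarithms, $\varphi_t\,t=-\ln{\cal L}_t(\sigma_t^2\eta_t;k_t,\alpha)$, which is (\ref{eq:drift}). I expect the only genuinely delicate point to be the justification of the analytic continuation of the characteristic function to $u=-i$ together with the attendant right-tail/integrability condition on $f_t$; the remainder is the standard computation for exponentials of processes with independent increments.
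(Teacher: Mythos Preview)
Your proof is correct and follows essentially the same approach as the paper: reduce the martingale property of $F_t(T)=F_0(T)e^{f_t}$ to the condition $\mathbb{E}[e^{f_t}]=1$ via independence of increments, and then evaluate the characteristic function (\ref{laplace}) at $u=-i$ to obtain $\mathbb{E}[e^{f_t}]=e^{\varphi_t t}\,{\cal L}_t(\sigma_t^2\eta_t;k_t,\alpha)$. Your treatment is in fact slightly more careful than the paper's, since you explicitly discuss the domain condition on $\sigma_t^2\eta_t$ needed for the analytic continuation and the integrability of $e^{f_t}$, which the paper leaves implicit.
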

	
	\begin{proof}
		See Appendix A
	\end{proof}
	
	
	We introduce a sub-case of ATS, determined by self-similar functions of time. 
	In Subsection {\bf \ref{section:scaling}}, 
	we show that this family of processes describes accurately 
	market implied volatility surfaces. 
	Power-law scaling functions of time allow us to rewrite {\bf Theorem \ref{theorem:f_Additive}} conditions as simple inequalities on the scaling parameters.
	
	\begin{theorem}{\bf Power-law scaling ATS\\}\label{theorem:semplified_f}
		There exists an ATS with
		\[
		k_t=\bar{k} \; t^\beta, \qquad \eta_t=\bar{\eta} \; t^\delta,  \qquad\sigma_t=\bar{ \sigma}\;,
		\]
		where  $\alpha \in [0,1)$, $\bar{\sigma}, \bar{k}, \bar{\eta} \in \mathbb{R}^+$, and $\beta, \delta \in \mathbb{R}$
		that satisfy the following conditions:
		\begin{enumerate}
			\item $ \displaystyle 0 \leq \beta\leq \frac{1}{1-\alpha/2}\;\;;$ 
			\item $-\min\left(\beta, \dfrac{1-\beta\left(1-\alpha\right)}{\alpha} \right)<\delta\leq 0\;\;;$
		\end{enumerate}
		where the second condition reduces to $ -\beta< \delta \leq 0$ for $\alpha =0$.
	\end{theorem}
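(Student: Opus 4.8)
The plan is to check that the ansatz $k_t=\bar k\,t^{\beta}$, $\eta_t=\bar\eta\,t^{\delta}$, $\sigma_t=\bar\sigma$ verifies the two hypotheses of \textbf{Theorem \ref{theorem:f_Additive}} whenever conditions 1--2 above hold; existence of the ATS then follows at once from that \textbf{Theorem}. Throughout set $a_t:=1/2+\eta_t=1/2+\bar\eta\,t^{\delta}$, $b_t:=2(1-\alpha)/(\sigma_t^2k_t)=C_1\,t^{-\beta}$ with $C_1:=2(1-\alpha)/(\bar\sigma^2\bar k)>0$, and, for $\alpha\in(0,1)$, $\rho:=\big(1-\beta(1-\alpha)\big)/\alpha$, so that
\[
g_1=a_t-\sqrt{a_t^2+b_t},\qquad g_2=-a_t-\sqrt{a_t^2+b_t},\qquad g_3=C_2\,t^{\rho}\sqrt{a_t^2+b_t},
\]
with $C_2:=\bar\sigma^{2}\bar k^{-(1-\alpha)/\alpha}>0$. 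Two preliminary remarks: conditions 1--2 force $\beta>0$ (if $\beta=0$ the lower bound in condition 2 collapses to $0$, contradicting $\delta\le 0$) and, for $\alpha\in(0,1)$, $\rho>0$ (since $\beta\le 1/(1-\alpha/2)<1/(1-\alpha)$); moreover $\delta\le 0$ and $\beta\ge 0$ make $a_t$, $b_t$, and hence $\sqrt{a_t^2+b_t}$, non-increasing in $t$, so $g_2$ is a sum of two non-decreasing functions and needs no further work.

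For $g_1$ I would use the derivative. One finds $g_1'(t)=\big[a_t'(\sqrt{a_t^2+b_t}-a_t)-b_t'/2\big]\big/\sqrt{a_t^2+b_t}$, so $g_1'\ge 0$ iff $a_t'(\sqrt{a_t^2+b_t}-a_t)\ge b_t'/2$; since both sides are $\le 0$ (as $a_t'\le 0$, $b_t'\le 0$, $\sqrt{a_t^2+b_t}\ge a_t$), this is equivalent to $|a_t'|(\sqrt{a_t^2+b_t}-a_t)\le|b_t'|/2$. Now $\sqrt{a_t^2+b_t}-a_t=b_t/(\sqrt{a_t^2+b_t}+a_t)\le b_t/(2a_t)$, and using the crude bound $a_t\ge\bar\eta\,t^{\delta}$,
\[
|a_t'|\,\big(\sqrt{a_t^2+b_t}-a_t\big)\;\le\;\bar\eta\,|\delta|\,t^{\delta-1}\cdot\frac{C_1\,t^{-\beta}}{2\bar\eta\,t^{\delta}}\;=\;\frac{|\delta|\,C_1}{2}\,t^{-\beta-1},
\]
whereas $|b_t'|/2=(\beta C_1/2)\,t^{-\beta-1}$; hence the inequality reduces exactly to $|\delta|\le\beta$, which holds (strictly) because condition 2 gives $\delta>-\beta$. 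The decisive point is that dropping the constant $1/2$ in $a_t$ makes the $t^{\delta}$ factor cancel; the naive sufficient condition ``$b_t/a_t^{2}$ non-increasing'' is too strong (it already fails for $-\beta<\delta<-\beta/2$), so this derivative estimate is genuinely needed.

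For $g_3$ (case $\alpha\in(0,1)$) it suffices to show that $h(t):=g_3(t)^2/C_2^{2}=t^{2\rho}(a_t^2+b_t)$ is non-decreasing. Expanding,
\[
h(t)=\frac14\,t^{2\rho}+\bar\eta\,t^{2\rho+\delta}+\bar\eta^{2}\,t^{2\rho+2\delta}+C_1\,t^{2\rho-\beta},
\]
a sum of powers with positive coefficients, hence non-decreasing once all four exponents are $\ge 0$. One checks $2\rho-\beta\ge 0\iff\beta\le 1/(1-\alpha/2)$, i.e. condition 1; $2\rho+2\delta\ge 0\iff\delta\ge-\rho$, i.e. the relevant half of condition 2; and $2\rho\ge\beta>0$ together with $\delta\le 0$ give $2\rho\ge 0$ and $2\rho+\delta\ge 2\rho+2\delta\ge 0$. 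Finally the second hypothesis of \textbf{Theorem \ref{theorem:f_Additive}} becomes, under the ansatz, $t\,\sigma_t^{2}\eta_t=\bar\sigma^{2}\bar\eta\,t^{1+\delta}\to 0$ and $t\,\sigma_t^{2\alpha}\eta_t^{\alpha}/k_t^{1-\alpha}=\bar\sigma^{2\alpha}\bar\eta^{\alpha}\bar k^{-(1-\alpha)}\,t^{1+\alpha\delta-\beta(1-\alpha)}\to 0$ as $t\to 0$, i.e. $\delta>-1$ and $\delta>-\rho$; the latter is condition 2 again, and the former follows from it since $\min(\beta,\rho)\le 1$ (if $\beta>1$ then $\rho=\big(1-\beta(1-\alpha)\big)/\alpha<1$), so $\delta>-\min(\beta,\rho)\ge-1$. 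For $\alpha=0$ the argument is analogous, conditions 1--2 reading $0\le\beta\le 1$ and $-\beta<\delta\le 0$: the $g_1,g_2$ computations are unchanged, and the remaining requirements of \textbf{Theorem \ref{theorem:f_Additive}} (now in their $\alpha\to 0$ form) reduce to inequalities already implied by conditions 1--2.

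The step I expect to be the main obstacle is the monotonicity of $g_1$: the obvious sufficient condition is too demanding, so one must push through the derivative estimate above and spot the cancellation of $t^{\delta}$ that pins the condition down to $\delta>-\beta$. Everything else is power-law exponent bookkeeping matched term by term against conditions 1 and 2.
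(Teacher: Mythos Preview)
Your proof is correct and follows essentially the same route as the paper: both verify the hypotheses of \textbf{Theorem \ref{theorem:f_Additive}} term by term, treating $g_2$ and $g_3$ as sums of monotone power functions and handling $g_1$ via its derivative to extract the condition $|\delta|\le\beta$. Your estimate for $g_1'$ via $\sqrt{a_t^2+b_t}-a_t\le b_t/(2a_t)$ is a cleaner variant of the paper's squaring manoeuvre, and you make explicit a couple of points (e.g.\ $\delta>-1$, and $\beta>0$ being forced) that the paper leaves implicit, but the overall argument is the same.
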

	\begin{proof}
		See Appendix A
	\end{proof}
	It is interesting to observe that the LTS case falls in the subcase described by this theorem. This corresponds to the case with both $k_t$ and $\eta_t$ time independent;
	that is, $\beta$ and $\delta$ equal to zero.

	\bigskip 
	
	The following result allows us to obtain a new additive process from a known one with a deterministic time change. 
	\begin{proposition} {\bf Deterministic time change of additive process} \label{theorem:NewAdditive}\\ 
		Given an additive process  $\left \{X_t \right \}_{t\geq 0}$ and  a real continuous increasing function of time ${r}_t$ s.t. $r_0=0$, 
		then $\left \{X_{r_t} \right \}_{t\geq 0}$ is an additive process.
	\end{proposition}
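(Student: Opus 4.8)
The plan is to check directly that $Y_t := X_{r_t}$ satisfies the two requirements of \textbf{Definition \ref{definition:Additive}}, together with the normalization $Y_0 = 0$ a.s.\ and the c\`adl\`ag property. Since $r_t$ is increasing with $r_0 = 0$ we have $r_t \ge 0$ for all $t \ge 0$, so $Y_t$ is well defined, and $Y_0 = X_{r_0} = X_0 = 0$ a.s. For the c\`adl\`ag property I would compose the (a.s.) c\`adl\`ag paths of $X$ with the continuous increasing deterministic map $t \mapsto r_t$: if $s \downarrow t$ then $r_s \downarrow r_t$, and right-continuity of $X$ gives $X_{r_s} \to X_{r_t}$; if $s \uparrow t$ then $r_s \uparrow r_t$, and either $r_s < r_t$ near $t$, so $X_{r_s} \to X_{r_t^-}$ (which exists because $X$ has left limits), or $r$ is constant on a left neighbourhood of $t$, so $X_{r_s} = X_{r_t}$ there; in all cases the left limit of $Y$ at $t$ exists.

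For independent increments, fix $0 \le t_0 < t_1 < \dots < t_n$. Monotonicity of $r$ gives $r_{t_0} \le r_{t_1} \le \dots \le r_{t_n}$, and
\[
Y_{t_i} - Y_{t_{i-1}} = X_{r_{t_i}} - X_{r_{t_{i-1}}}, \qquad i = 1,\dots,n,
\]
are increments of $X$ over the non-overlapping (possibly degenerate) intervals $[r_{t_{i-1}}, r_{t_i}]$. If all the inequalities $r_{t_{i-1}} < r_{t_i}$ are strict, independence is exactly the independent-increments axiom for $X$. If some $r_{t_{i-1}} = r_{t_i}$, the corresponding increment is a.s.\ zero, hence independent of the remaining ones; discarding these and applying the axiom for $X$ to the strictly increasing subsequence extracted from $(r_{t_i})$ gives the claim.

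For stochastic continuity, fix $t \ge 0$ and $\epsilon > 0$, and let $h_n \to 0$. Continuity of $r$ gives $\rho_n := r_{t+h_n} - r_t \to 0$, and $Y_{t+h_n} - Y_t = X_{r_t + \rho_n} - X_{r_t}$ is (up to sign) an increment of $X$ over an interval of vanishing length based at the fixed point $r_t$; stochastic continuity of the Additive process $X$ then yields $P(|Y_{t+h_n} - Y_t| > \epsilon) \to 0$. Since the sequence $h_n \to 0$ is arbitrary, $\lim_{h \to 0} P(|Y_{t+h} - Y_t| > \epsilon) = 0$, which is condition~2 of \textbf{Definition \ref{definition:Additive}}, and the proof is complete. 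The only delicate points are bookkeeping: allowing $r_t$ to have constant stretches — so that some ``increments'' are degenerate and must be removed before invoking the independent-increments axiom — and tracking the sign of $\rho_n$ for $h_n < 0$ so that stochastic continuity of $X$ is used two-sidedly at $r_t$. Neither is a real obstacle; the argument is essentially a change of variables in the time index.
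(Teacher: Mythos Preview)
Your proof is correct and follows essentially the same route as the paper: verify the conditions of \textbf{Definition \ref{definition:Additive}} directly, using monotonicity of $r_t$ for independent increments and continuity of $r_t$ for stochastic continuity. You add more detail than the paper does --- in particular the explicit c\`adl\`ag check and the careful handling of possible constant stretches of $r$ --- but the underlying argument is the same.
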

	\begin{proof}
		See Appendix A
	\end{proof}
	
	Thanks to \textbf{Proposition \ref{theorem:NewAdditive}}, it is possible to extend the ATS power-law scaling sub-case to a case with time-dependent $\sigma_t$. Indeed, if $\sigma_t^2 t$ is non decreasing, we can use it to time-change a power-law scaling ATS without losing the property of independent increments: being able to reproduce the volatility term structure is an important feature from a practitioner perspective.
	
	\smallskip
	In the next Section, we show that the ATS model introduced in this Section describes accurately volatility surfaces observed in the equity derivative market. 
	
	\section{Model calibration and power law scaling}
	\label{section:calibration}
	In this Section, we show that the ATS processes achieve excellent calibration results on the S\&P 500 and EURO STOXX 50 
	volatility surfaces; moreover,
	we show that power-law scaling parameters are observed in market data.  
	
	First, after having described the dataset, 
	we illustrate the model calibration procedure and compare the performance of ATS processes with some benchmarks (LTS processes and  Sato processes in
	\citet{carr2007self}). Then, 
	we outline some statistical evidence that the market-implied volatility surface is compatible with a power-law scaling of ATS parameters.
	
	\subsection{Dataset} 
	\label{subsection:dataset}
	
	We analyze all quoted  S\&P 500 and EURO STOXX 50 option prices observed at 11:00 am New York Time on the $30^{th}$ of May 2013. The dataset is composed of real market quotes (no smoothing or interpolation).
	Let us recall that the options on these two indices are the most liquid options in the equity market at the world level. 
	For both indices, options expire on the third Friday of
	March, June, September, and December in the front year and June and December in the next year. 
	In the EURO STOXX 50 case also December contracts for the following three years are available.
	The dataset includes the risk-free interest rate curves bootstrapped from (USD and EUR) OIS curves.
	Financial data are provided by Bloomberg.  
	The dataset contains all bid/ask prices for both call and put. The strikes are in a regular grid for each available maturity. We exclude options that do not satisfy two simple liquidity thresholds. We discard options whose price is less than 10\% the minimum difference in the grid of market strikes (the so-called \textit{penny options}) and options with bid-ask over bid bigger than $60\%$. The last condition filters out strikes for which either a bid or an ask price is missing.
	\bigskip
	
	
	We use the synthetic forward, as forward price, because this allows a perfect synchronization with option prices and, for several maturities, it identifies the most liquid forward in the market. 
	The synthetic forward price is obtained for every maturity, from very liquid options as the (algebraic) mean of the lowest forward ask and the highest forward bid. 
	
	\smallskip
	We implement a simple iterated algorithm that identifies the synthetic forward price at a given maturity $T$: let us briefly describe it.
	We start selecting the call and put options with strike price nearest to the spot price for the shortest maturity or to the previous maturity forward price for the next maturities. 
	We compute forward bid, ask, and mid prices for that strike price.
	We consider the options with the nearest superior strike. If the forward mid-price computed previously falls within the new bid-ask interval, 
	then the updated forward bid is the highest value among the two forward bids, 
	while the updated ask is the lowest value among the two ask prices. 
	The updated forward mid-price corresponds to the mean of the updated bid and ask. 
	Then, we consider the nearest inferior strike and iterate the same procedure comparing the updated forward mid-price with the new bid/ask prices relative to this new strike.
	This procedure is iterated with the next superior strike  and then with the next    inferior 
	strike, and so on for all the options present at that maturity $T$.
	
	\smallskip
	In Figure \ref{fig:forward curve},
	we show, for a given underlying and a given maturity, the values considered in the forward price construction and the value selected by the procedure. 
	\begin{center}
		\begin{minipage}[t]{1\textwidth}
			\centering
			{\includegraphics[width=.90\textwidth]{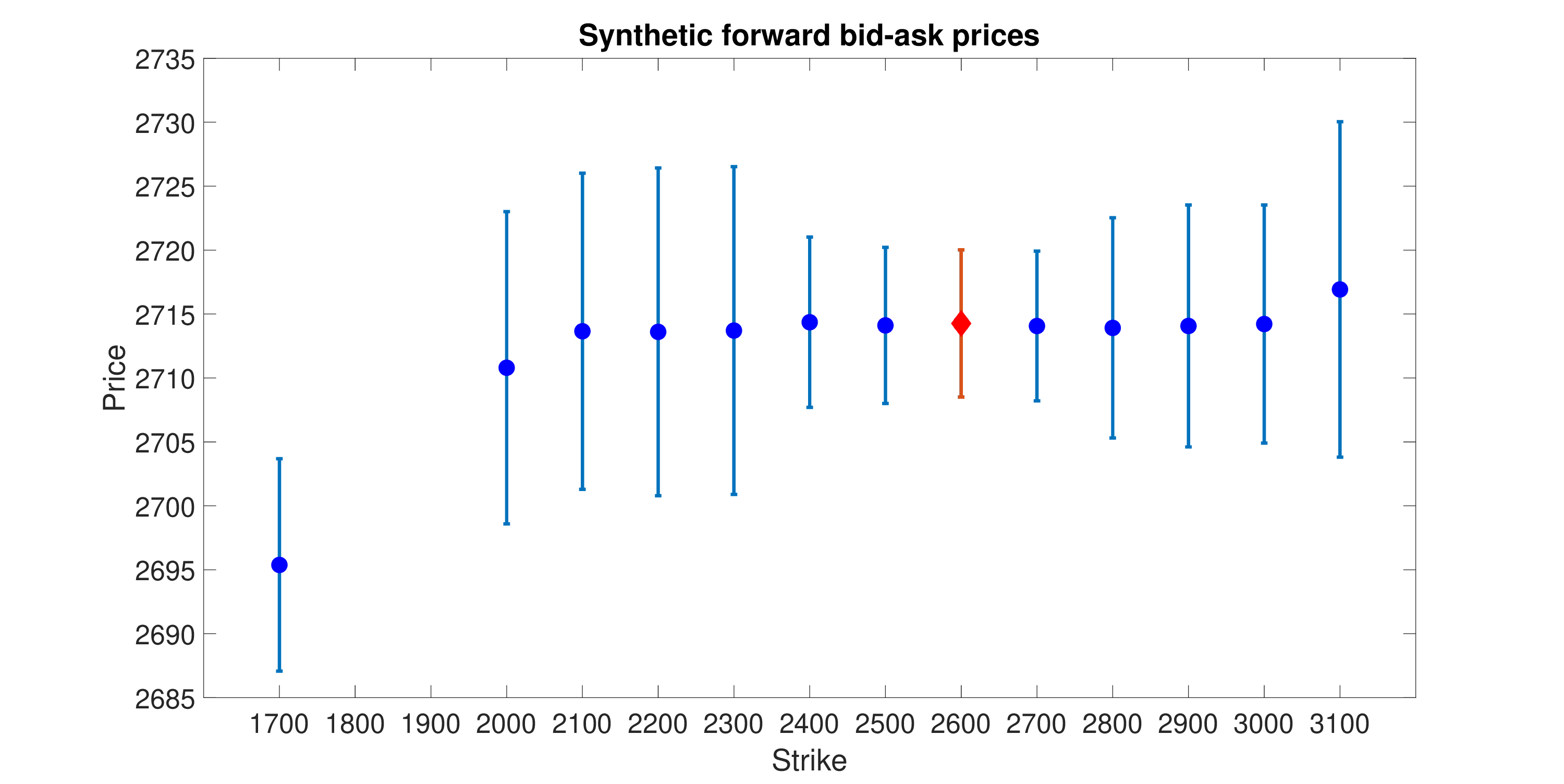}} 
			\captionof{figure}{\small 
		 EURO STOXX 50 	synthetic forward prices on the $30^{th}$ of May 2013 at 11 am NT for the JUN14 maturity:
				bid, ask, and mid forward prices. Only prices not discarded by the two liquidity criteria are shown in the figure. 
				According to the algorithm described in the text also the price related to the  strike  $1700$
				is discarded from the forward price computation.
				We show in red the corresponding forward bid-ask prices and with a diamond, the selected forward price $F_0(T)$ relative to this expiry. 
			}
			\label{fig:forward curve}
		\end{minipage}
	\end{center}
	
	In Figure \ref{fig:forward term struct}, we plot the  bid, ask, and mid synthetic forward prices for the different maturities available
	for the S\&P 500 and the EURO STOXX 50. 
	
	\begin{center}
		\begin{minipage}[t]{0.9\textwidth}
			\includegraphics[width=\textwidth]{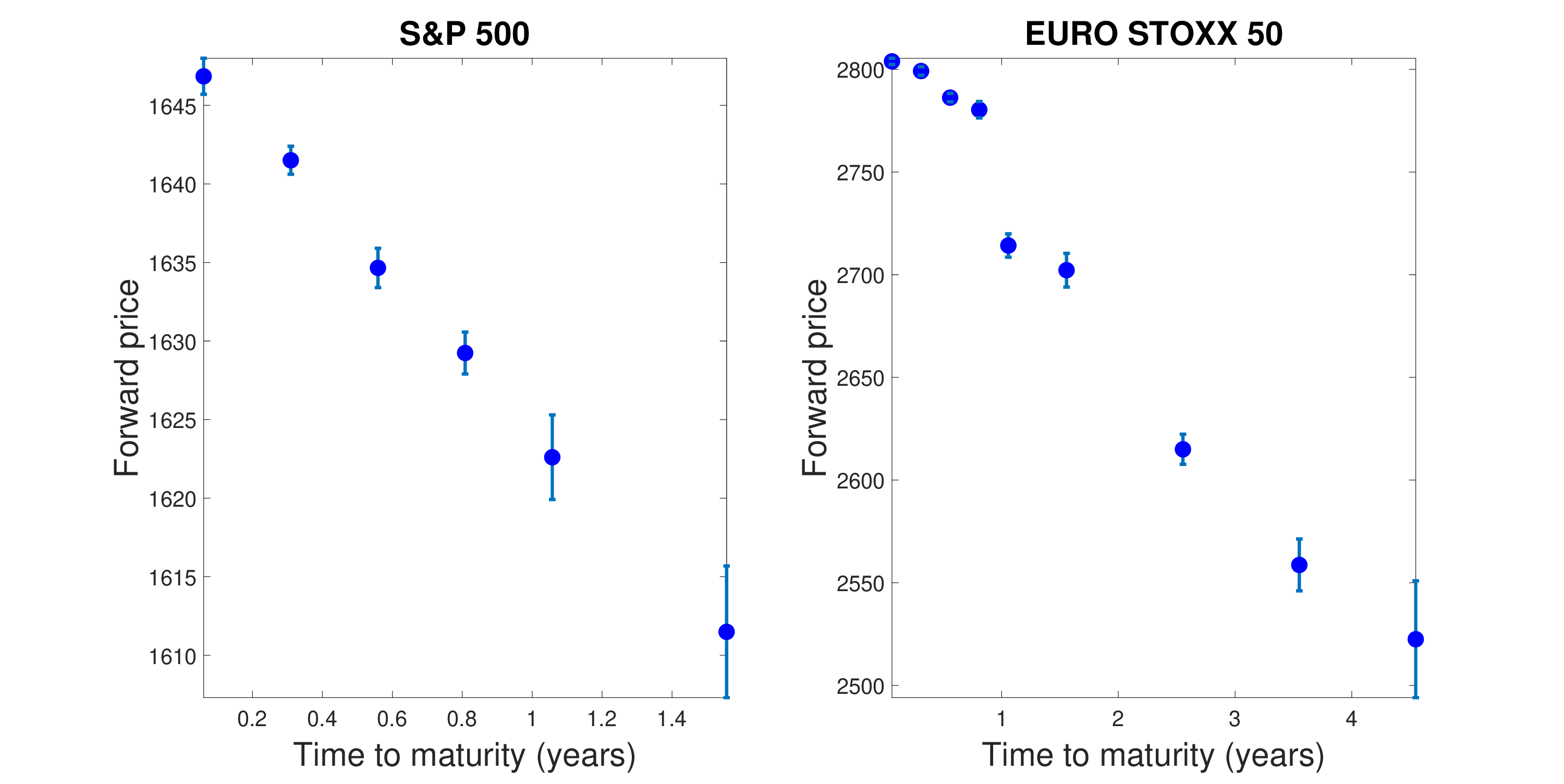} 
			\captionof{figure}{Term structure of the synthetic     forward prices on the $30^{th}$ of May 2013 : 
				we report also the observed bid  and ask prices for every maturity. 
				On the left hand side, we plot  the S\&P 500 index case and, on the right hand side, the EURO STOXX 50 index     case.} 
			\label{fig:forward term struct}
		\end{minipage}
	\end{center}
	
	\subsection{Calibration}
	\label{subsection:model calibration}
	
	
	We calibrate the ATS following the procedure discussed by \citet[][Ch.14, pp.464-465]{Cont}.  We cut the volatility surface into slices, each one containing options with the same maturity, and calibrate each slice separately. 	Hereinafter, we focus on $\alpha = 1/2$ (NIG) and $\alpha =0$ (VG), which are the two (ATS and Sato) generalizations of the two most frequently used LTS processes.
	For every fixed maturity $T$, it is possible to define a new  L\'evy normal tempered stable process such that, at time $T$, 
	its marginal distribution is equal to the marginal distribution of an ATS.
	A different L\'evy NIG and VG is calibrated for every different maturity and the three time-dependent parameters $k_T, \eta_T, \sigma_T$ are obtained. The calibration is performed imposing the conditions of monotonicity of \textbf{Theorem \ref{theorem:f_Additive}}.

	Beneath the ATS processes, we consider the calibration of the standard L\'evy processes and of the (four parameters) Sato processes proposed by \citet{carr2007self}.\footnote{We underline that, in both cases (LTS and Sato), model parameters are obtained through a global calibration of the whole volatility surface.} 
We remind that the latter are additive and self-similar processes \citep[see, e.g.][]{sato1991self}.
Call option prices, with strike $K$ and maturity $T$, are computed using the \citet{lewis2001simple} formula
	\begin{equation}
	C (K,T)   =   {B_T \; F_{0} (T) } \left\{
	{\displaystyle 1 - e^{{x}/2} 
		\int^{\infty}_{-\infty} \frac{d  z}{2 \pi}
		e^{ i  z \, x} \phi^c \left( -z - \frac{i}{2} \right)  \frac{1}{ z^2 + \frac{1}{4}} } 
	\right\}\;\,,
	\label{eq: Lewis}
	\end{equation}
	where $\phi^c ( u) $ is the characteristic function of  $f_T$,
	$x := \ln K/F_0 (T) $ is the {\it moneyness}, and $B_T$ is the discount factor between value date and $T$. 
	
	
	The calibration is performed by minimizing the Euclidean distance between model and market prices. 
	The simplex method is used to calibrate every maturity of the ATS process.
	For L\'evy processes and Sato processes, 
	because standard routines for global minimum algorithms  are not satisfactory, 
	we consider 
	a differential evolution algorithm together with a  multi-start simplex method.
	
	The calibration performance is reported in Table \ref{tab: MSE_APE} in terms of Mean Squared Error (MSE) and Mean Absolute Percentage Error (MAPE).\footnote{Calibrated model parameters are available upon request.}  
	It is possible to observe that Sato processes slightly improve L\'evy performance, as reported in the literature \citep[see, e.g][]{carr2007self},
	while the ATS processes improvement is, on average, above two orders of magnitude.  Although we present the results for VG and NIG, 
	similar results can be obtained for all ATS processes with $\alpha \in [0, 1)$. The worst results are observed in the VG case.

	\begin{center}
		\begin{tabular} {|cc|ccc|ccc|}
			\toprule
			& & \multicolumn{3}{c|}{MSE} &\multicolumn{3}{c|}{MAPE} \\
			\hline
			Index & Model&  L\'evy &Sato & ATS  & L\'evy &Sato  & ATS   \\ 
			\hline 
			S\&P 500&NIG & ${ 4.56}$ & $ 1.92  $ &${\bf 0.02}$ &$3.13\%$ & $ 1.47\%   $ &${\bf 0.23\%}$\\
			S\&P 500 &VG &  $8.49$ & $2.20$ &${\bf 0.24}$ & $ 4.31\%$ & $1.62\%$ &${\bf 0.79 \%}$ \\
			Euro Stoxx 50 &NIG & $22.15$ & $ 9.87$ &${\bf 0.10}$& $1.75\%$ & $ 0.75\%$ &$ {\bf 0.09\%}$\\
			Euro Stoxx 50 &VG &  $55.81$ & $9.22$ &$ {\bf 0.35}$ & $2.85\%$ & $0.73\%$ &${\bf 0.21\%}$ \\
			\bottomrule
		\end{tabular}
		
		\captionof{table}{\small Calibration performance for the S\&P 500 and EURO STOXX $50$
			in terms of MSE and MAPE.
			In the NIG ($\alpha=1/2$) and VG ($\alpha=0$) cases, we consider the standard L\'evy process, the Sato process, and the corresponding ATS process.  
			Sato processes perform better than L\'evy processes but ATS improvement is far more significant: 
			two orders of magnitude for MSE and one order of magnitude for MAPE. 
		}		\label{tab: MSE_APE}
		
	\end{center}
	Figure \ref{figure:MSEcomparison} shows 
	the differences of MSE w.r.t. the different times to maturity for S\&P 500 volatility surface calibrated with a NIG process. 
	Sato and L\'evy LTS have a MSE of the same order of magnitude, while the improvement of ATS is of two orders of magnitude and particularly significant at short-time. 
	The short time improvement in implied volatility calibration is particularly evident,  
	as shown in Figure \ref{figure:vol}.
	\begin{center}
		\begin{minipage}[t]{1\textwidth}
			\centering
			\includegraphics[width=0.9\textwidth]{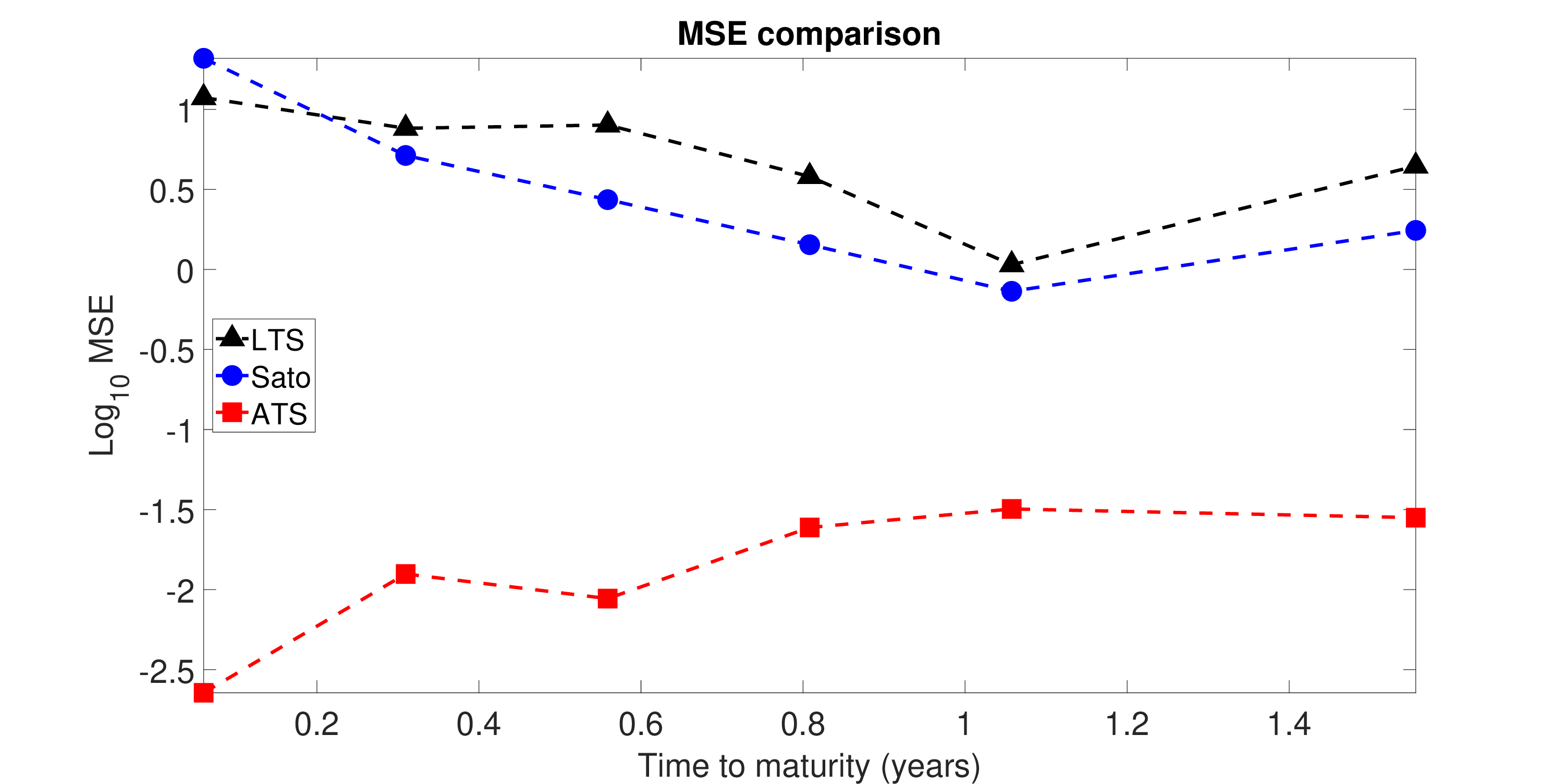} 
			\captionof{figure}{\small MSE w.r.t. the different times to maturity (in years) for S\&P 500 volatility surface calibrated with a NIG process. 
				Sato (circles) and L\'evy (triangles) have a MSE of the same order of magnitude, while the improvement of ATS (squares) is of two orders of magnitude and particularly significant at short-time.} \label{figure:MSEcomparison}
		\end{minipage}
	\end{center}
	
	
	In Figure \ref{figure:vol},
	we plot the market implied volatility and the volatility replicated via ATS, LTS, and Sato processes at the 22 days (on the left) and 9 months and 21 days (on the right) maturities. 
	We observe that the ATS implied volatility is the closest to the market implied volatility in any case and it significantly improves
	both LTS and Sato processes, particularly for small maturities.  Similar results hold for all other ATS. 
	\begin{center}
		\begin{minipage}[t]{1\textwidth}
			\centering
			\includegraphics[width=0.9\textwidth]{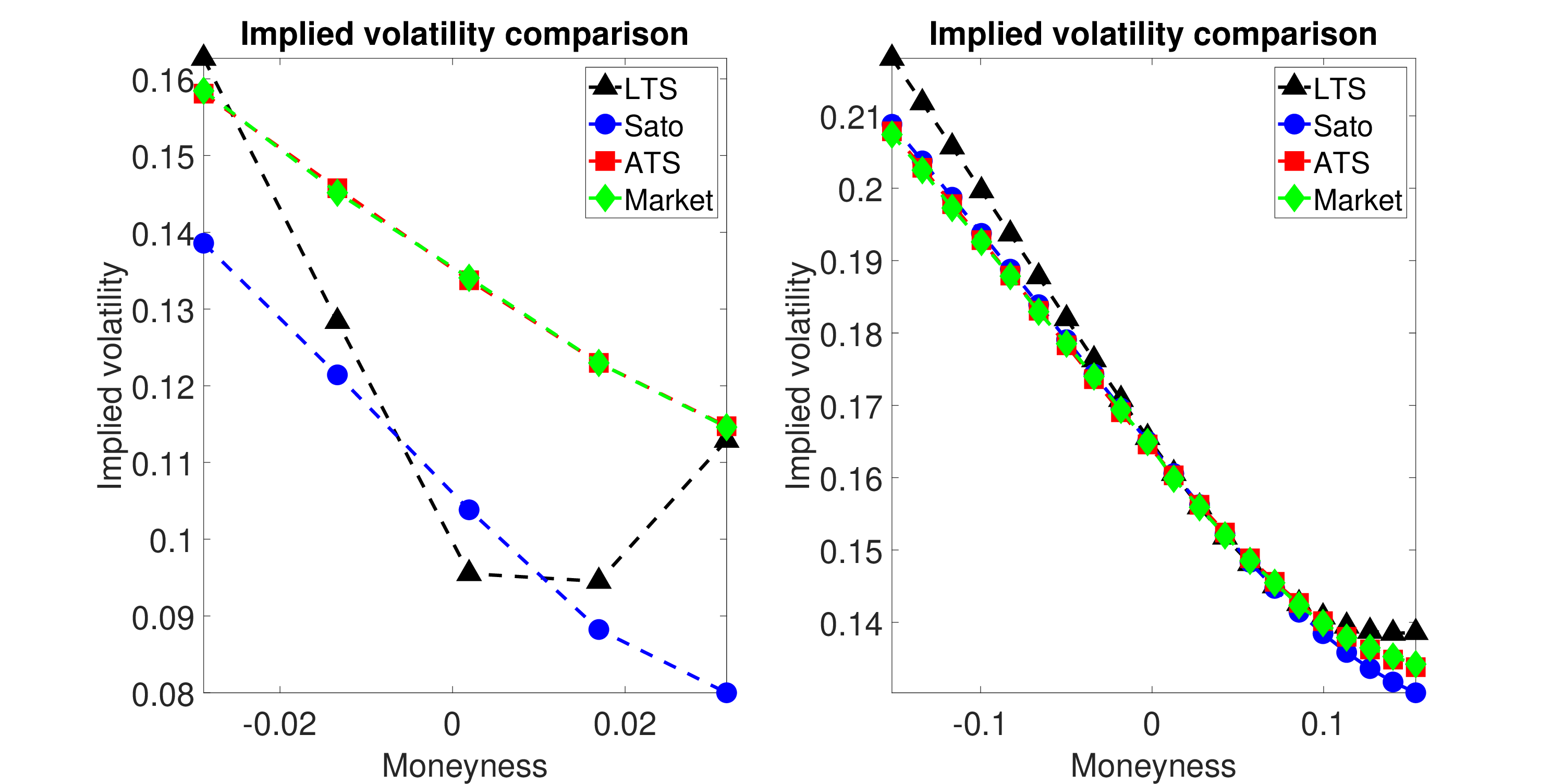} 
			\captionof{figure}{\small 
				Implied volatility smile for S\&P 500 for a given time to maturity: 22 days (on the left) and 9 months and 21 days (on the right). 
				The NIG ATS process, Sato process, and LTS process implied volatilities are plotted together with the market-implied volatility. 
				ATS reproduces the smile significantly better than the alternatives, the improvement is particularly evident for small maturities.}\label{figure:vol}
		\end{minipage}
	\end{center}
	In Figure \ref{figure:skew}, we plot the market and the ATS implied volatility \textit{skew}  for EURO STOXX 50 w.r.t. the times to maturity. We observe that the calibrated ATS replicates accurately the market implied volatility \textit{skew}.
	\begin{center}
		\begin{minipage}[t]{1\textwidth}
			\centering
			\includegraphics[width=0.9\textwidth]{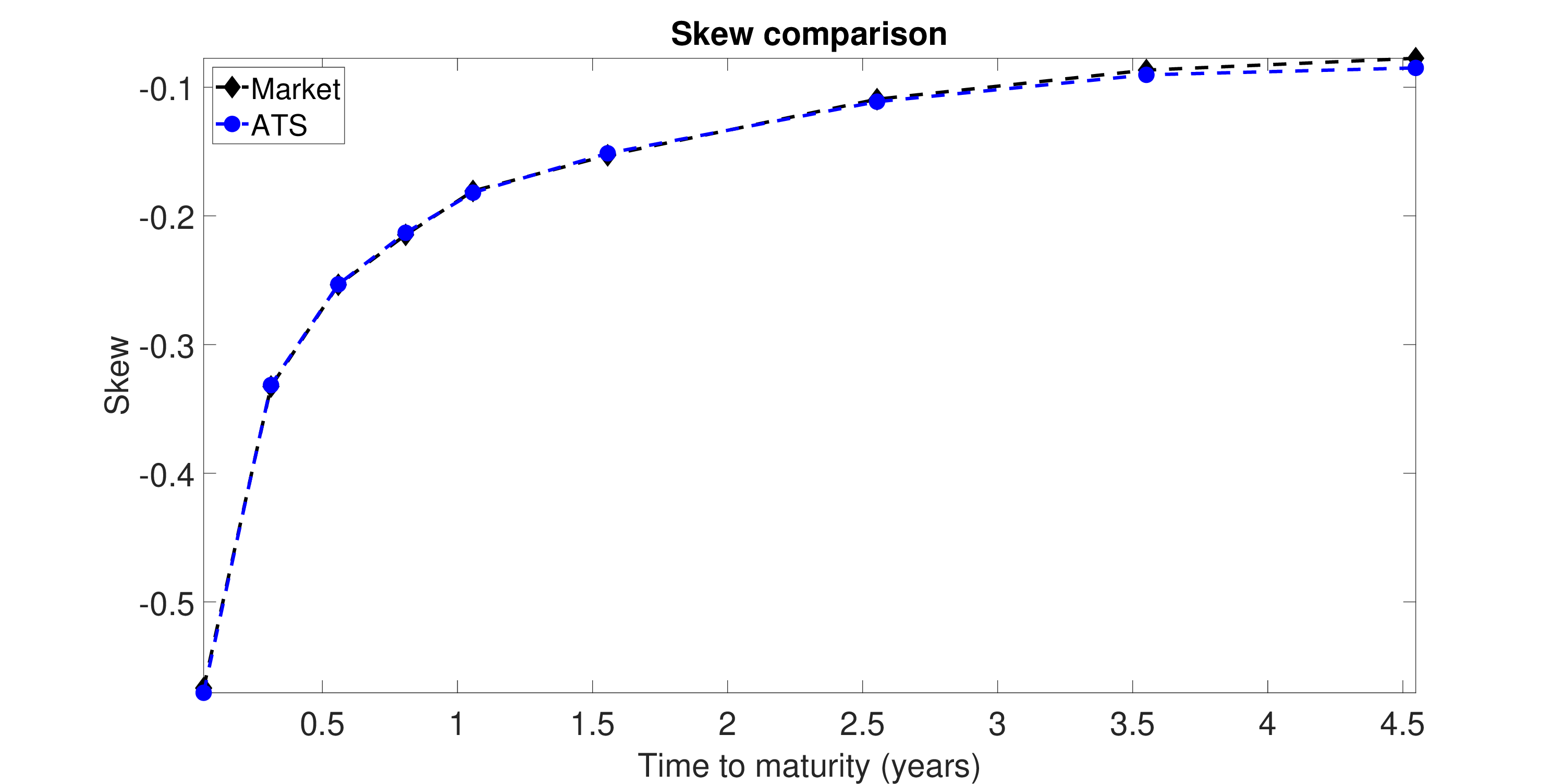} 
			\captionof{figure}{\small 
				The market and the NIG ATS implied volatility \textit{skew} for EURO STOXX 50 w.r.t. the times to maturity. ATS replicates the market implied volatility \textit{skew} behavior.}\label{figure:skew}
		\end{minipage}
	\end{center}
	The calibration results of ATS are startling. In particular, we have a model that reproduces “exactly” the volatility term structure observed in the market.\\
	
	It is useful to stop and comment. The ATS model allows us to calibrate slice-by-slice the surface; we have only to impose the monotonicity conditions of \textbf{Theorem \ref{theorem:f_Additive}}. With the slice-by-slice approach, we use 3 parameters for every expiry (e.g. 18 parameters in the S\&P case). We have observed that ATS outperforms Levy and Sato processes, the benchmark pure-jump processes in the literature. It could seem unfair to compare the calibration results of ATS with a Levy (3 parameters) and a Sato (4 parameters).\\
	
	In the next Subsection, we show that this family of additive processes combines parsimony with the desired property of a perfect fit of the volatility term structure: we show that, once the term structure has been taken into account, only 2 free parameters allow a detailed calibration of the whole volatility surface.
	\subsection{Scaling properties}
	\label{section:scaling}
	In this Subsection, we show that power-law scaling parameters are observed in market data.
	This stylized fact is extremely relevant:
	we observe statistical evidence that the market-implied volatility surface is compatible with a power-law scaling ATS of \textbf{Theorem \ref{theorem:semplified_f}}. 
	
	\bigskip 
	We introduce a new ATS process, w.r.t. the time $\theta:=T \sigma^2_T$.\\
	We define $\hat{k}_\theta:=k_T \sigma_T^2$ and $\hat{\eta}_\theta:=\eta_T$. We call  $\hat{ f}_{\theta}$ the ATS with parameters $ \hat{k}_\theta$, $\hat{\eta}_\theta$ and $\sigma_\theta:=1$. Notice that for every maturity $\hat{ f}_{\theta}$ has the same characteristic function of the calibrated ATS $f_T$.
	We analyze the re-scaled parameters, in both S\&P 500 and EURO STOXX 50 cases.
	We observe a self-similar behavior of ${\hat k}_\theta$ and ${\hat \eta}_\theta$; that is,
	\begin{equation}
	\begin{cases}
	\hat{k}_\theta &=\;  \bar{k}\theta^\beta \\
	\hat{\eta}_\theta &=\;\bar{\eta}\theta^\delta 
	\end{cases}\; ,
	\label{eq:scaling}
	\end{equation}
	where $\bar{k}$, $\bar{\eta}$ are positive constants and $\beta$, $\delta$ are real constant parameters. 
	To investigate this behavior and to infer the value of the scaling parameters we consider equations (\ref{eq:scaling}) in the log-log scale. 
	
	In Figures \ref{figure:SPPNIG} and \ref{figure:EUVG} we plot the weighted regression lines and the observed time-dependent parameters $\ln {\hat k}_\theta$ and $\ln {\hat \eta}_\theta$ 
	with their confidence intervals 
	for S\&P $500$ and EURO STOXX $50$.
	The confidence intervals are two times the standard deviations of $\ln {\hat k}_\theta$,  of
	$\ln{\hat \eta}_\theta$ and of $\ln \theta$. 
	In Appendix B, we discuss the estimation of the standard deviations via a confidence interval propagation technique
	and the selection of the weights.
	\begin{center}
		\begin{minipage}[t]{1\textwidth}
			\centering
			\includegraphics[width=0.9\textwidth]{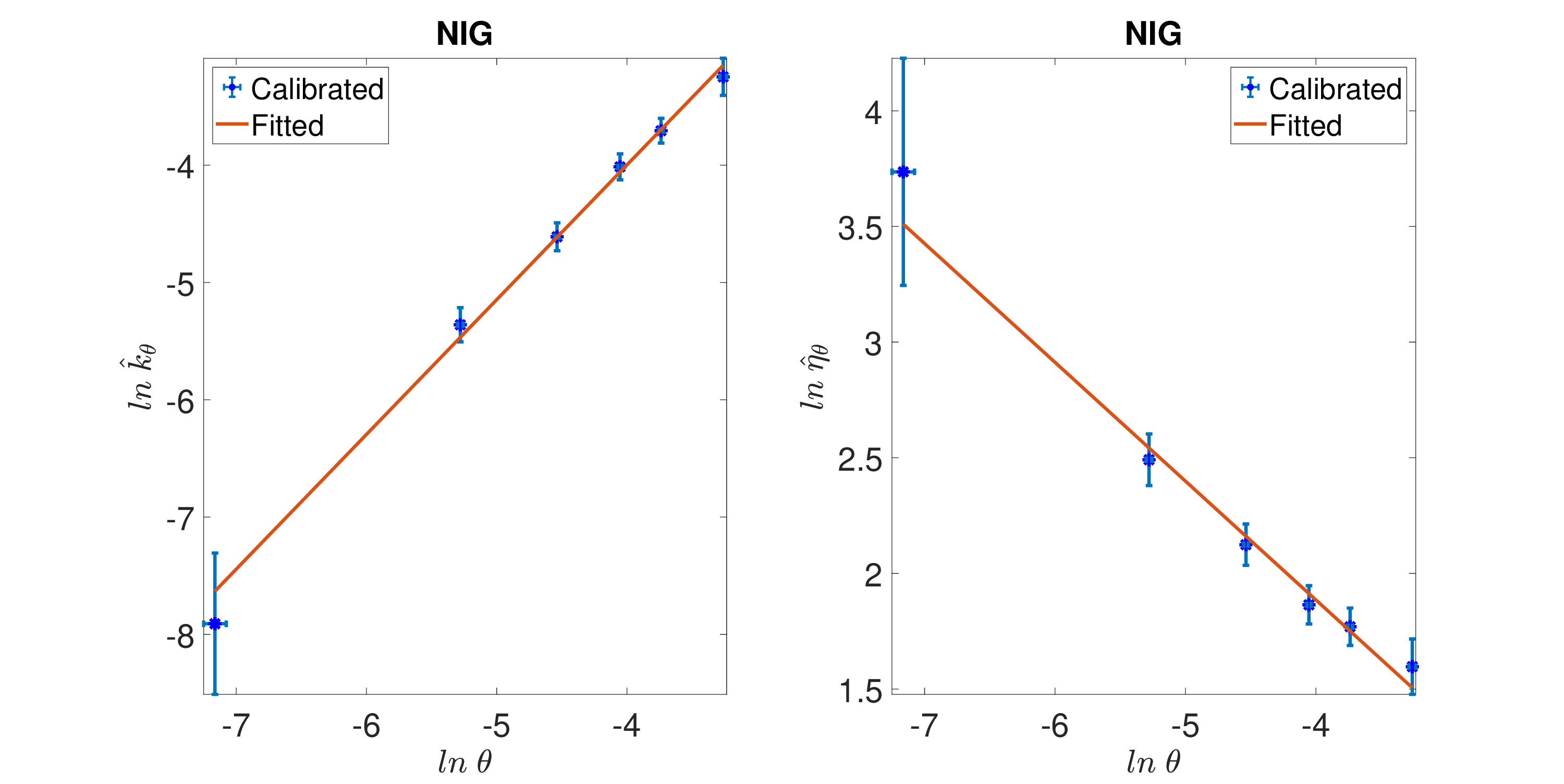} 
			\captionof{figure}{\small Weighted regression line and the observed time-dependent parameters $\ln {\hat k}_\theta$ and $\ln {\hat \eta}_\theta$ w.r.t. $\ln\theta$ for the NIG calibrated model for S\&P 500. We plot a confidence interval equal to two times the corresponding standard deviation. Notice that  confidence intervals on $\ln {\hat k}_\theta$  and $\ln {\hat \eta}_\theta$  are one order of magnitude wider than confidence intervals on $\ln \theta$. The scalings of $\hat{k}_\theta$ and $\hat{\eta}_\theta$ in (\ref{eq:scaling}) are statistically consistent with $\beta=1$ and $\delta=-1/2$. The values of $\theta$ correspond to times to maturity that goes from 22 days  to two years.}\label{figure:SPPNIG}
		\end{minipage}\\ \bigskip
		\begin{minipage}[t]{1\textwidth}
			\centering
			\includegraphics[width=0.9\textwidth]{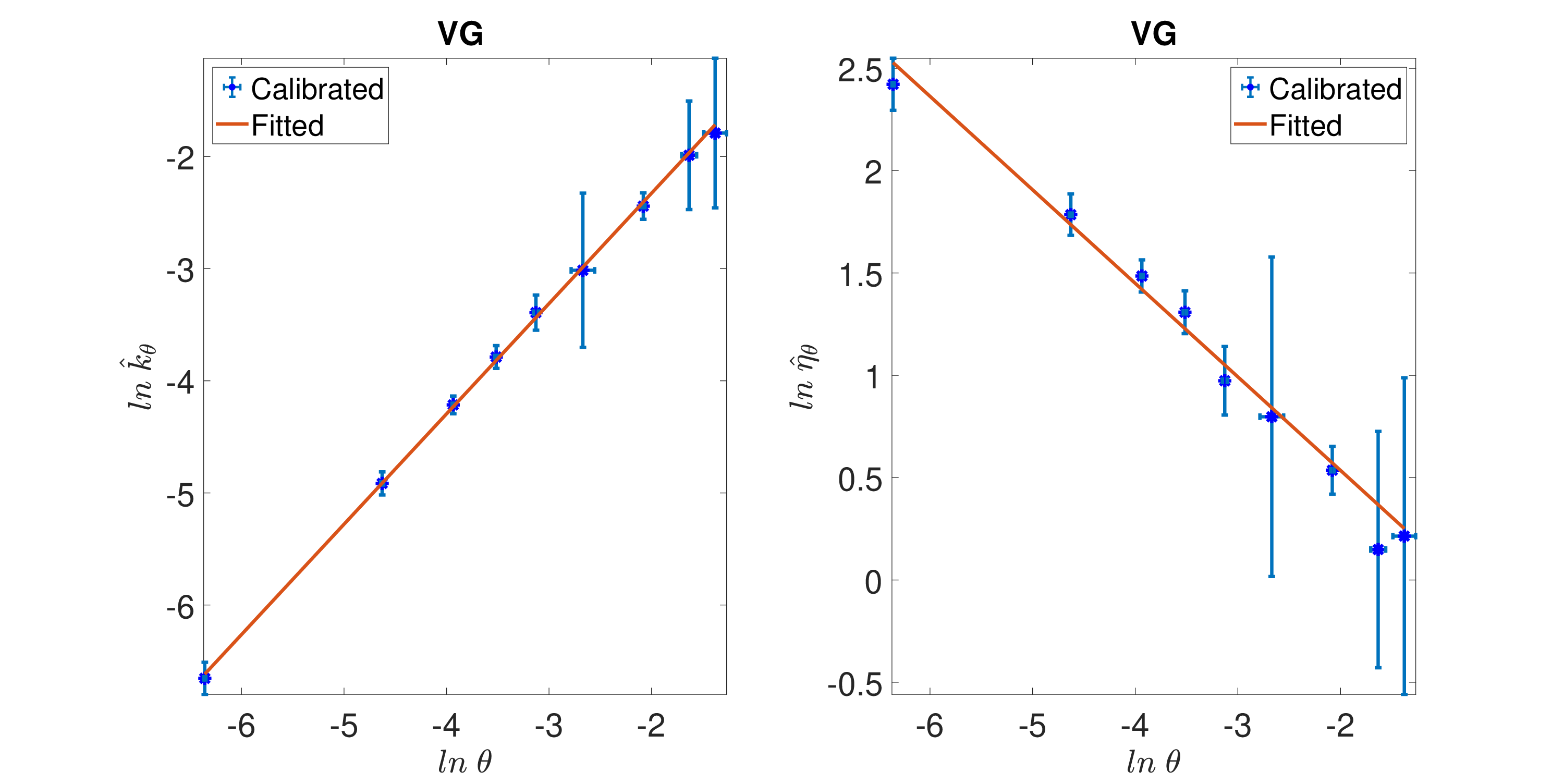}
			\captionof{figure}{\small	Weighted regression line and the observed time-dependent parameters $\ln {\hat k}_\theta$ and $\ln {\hat \eta}_\theta$ w.r.t. $\ln \theta$ for the VG model calibrated  on EURO STOXX 50. We plot a confidence interval equal to two times the corresponding standard deviation.  Notice that  confidence intervals on $\ln {\hat k}_\theta$  and $\ln {\hat \eta}_\theta$  are one order of magnitude wider than confidence intervals on $\ln \theta$. The scalings of $\hat{k}_\theta$ and $\hat{\eta}_\theta$ in (\ref{eq:scaling}) are statistically consistent with $\beta=1$ and $\delta=-1/2$.  The values of $\theta$ correspond to times to maturity that goes from 22 days to five years.}\label{figure:EUVG}
		\end{minipage}\\ \bigskip	
	\end{center}

	We have observed what seems to be a stylized fact of this model class: both ${\hat \eta}_\theta$ and ${\hat k}_\theta$ scale with power-law. The same scaling laws are observed both for short time-horizon (days) and long time-horizon (few years) options.
	The fitted regression lines provide us with an estimation of $\beta$ and $\delta$. 
	Moreover, let us emphasize that the scaling parameters appear qualitatively compatible to
	$\beta=1$ and $\delta=-\frac{1}{2}$ in all observed cases.

	We can test whether there is statistical evidence that our hypotheses are consistent with market data. The estimated scaling parameters together with the p-value of statistical tests are reported in Table \ref{tab:stat tests}. 
	
	\begin{center}
		\begin{tabular} {cccccc}
			\toprule
			Surface & Model&Parameter & Parameter's Value&p-value  \\ 
			\toprule
			S\&P 500&NIG & $\beta$ & $      1.10$ &$   0.228$ \\
			S\&P 500 &NIG &  $\delta$ & $ -0.47$ &$0.705$  \\
			S\&P 500&VG& $\beta$ & $1.01$ &$ 0.758$  \\
			S\&P 500&VG& $\delta$ & $ -0.43$ &$ 0.057$  \\
			EURO STOXX 50&NIG & $\beta$ & $1.02$ & $0.816$ \\
			EURO STOXX 50&NIG &  $\delta$ & $-0.44$ &$  0.472$  \\
			EURO STOXX 50 &VG&$\beta$ & $0.99   $ &$0.690$ \\
			EURO STOXX 50 &VG& $\delta$ & $-0.45$ &$0.195$ \\			
			\bottomrule
		\end{tabular}
		\captionof{table}{\small Scaling parameters calibrated from S\&P 500 and EURO STOXX 50 volatility surfaces for NIG ($\alpha=1/2$) and VG ($\alpha=0$). Parameter estimates are provided together with the p-values of the statistical tests that verify whether it is possible to accept the null hypotheses $\beta=1$ and $\delta=-\frac{1}{2}$.}
		\label{tab:stat tests}
	\end{center}
	
	In all cases, we accept the null hypotheses ($\beta=1$ and $\delta=-\frac{1}{2}$) with a 5\% confidence level. Notice that all p-values, except the S\&P 500 VG $\delta$, are above 19\%.

	In Table \ref{tab:stattests_int} we report an estimation of the parameter $\bar{k}$ and $\bar{\eta}$.
	\begin{center}
		\begin{tabular} {cccccc}
			\toprule
			Surface & Model&Parameter & Parameter's Value&p-value  \\ 
			\toprule
			S\&P 500&NIG & $\bar{k}$ & $      1.50$ &$   0.022$ \\
			S\&P 500 &NIG &  $\bar{\eta}$ & $ 0.98$ &$0.015$  \\
			S\&P 500&VG& $\bar{k}$ & $1.01$ &$ 0.001$  \\
			S\&P 500&VG& $\bar{\eta}$ & $ 0.91$ &$ 0.000$  \\
			EURO STOXX 50&NIG & $\bar{k}$ & $0.68$ & $0.023$ \\
			EURO STOXX 50&NIG &  $\bar{\eta}$ & $1.21$ &$ 0.021$  \\
			EURO STOXX 50 &VG&$\bar{k}$ & $0.98   $ &$0.000$ \\
			EURO STOXX 50 &VG& $\bar{\eta}$ & $0.72$ &$0.000$ \\			
			\bottomrule
		\end{tabular}
		\captionof{table}{\small  $\bar{k}$ and $\bar{\eta}$ calibrated from S\&P 500 and EURO STOXX 50 volatility surfaces. Parameter estimates are provided together with the p-values of the statistical tests that verify whether it is possible to accept the null hypothesis $\bar{k}=0$ and $\bar{\eta}=0$.}
		\label{tab:stattests_int}
	\end{center}
	We have statistical evidence that in all cases $\bar{k}$ and $\bar{\eta}$ are positive (we reject the null hypotheses of $\bar{k}=0$ and $\bar{\eta}=0$ with a 5\% confidence level). From these results and from Figure \ref{figure:skew} it is possible to infer a connection between a positive $\bar{\eta}$ and the observed negative \textit{skew}.\\
	
	It is interesting to observe that these estimated parameters satisfy the inequalities of {\bf Theorem \ref{theorem:semplified_f}} for the existence of a power-law scaling ATS $\hat{ f}_{\theta}$.
	Moreover, the re-scaled process is additive w.r.t. the ``real" time $T$.
	This fact is a consequence of
	the properties of the volatility term structure $\sigma_T$ (it is always observed on real data that $\sigma^2_T T$ is non-decreasing) 
	and of {\bf Proposition \ref{theorem:NewAdditive}}. 
	This proposition states that 
	if $\left\{\hat{f}_\theta\right\}_{\theta \geq 0}$ is an additive process then $\left\{\hat{f}_{T\sigma_T^2}\right\}_{T\geq 0}$ is an additive process w.r.t. $T$.\footnote{We have also considered a global calibration of the implied volatility surfaces considering the power-law scaling parameters in (\ref{eq:scaling}) with $\beta=1$ and $\delta = -0.5$. The results are of the same order of magnitude of Table \ref{tab: MSE_APE}.}

	\section{Model selection and robustness tests}
	\label{section:AdditionalResult}
	
	In this Section, we show two additional results for the proposed process class.
	First, we compare the ATS with two other additive processes already present in the financial literature and propose some statistical tests able to select the most adequate modeling description of the implied volatility surface.
	Then, we show that the results, described in detail in the previous Section, appear robust over time.
	
	\subsection{Model selection via statistical tests}
	
	In this Subsection, we compare ATS with two classes of additive  processes  already present in the financial literature, 
	the Sato processes \citep[see, e.g.][]{carr2007self} and the  additive processes constructed via additive subordination 
	\citep[see, e.g.][]{li2016Additive}. The comparison is among processes that have the marginal distribution of normal tempered stable type: i.e. with the Sato processes NIGSSD and VGSSD and with the sub-class of ATS constructed through additive subordination. An ATS process can be obtained as a Brownian motion subordinated with an additive subordinator,  as in \citet{li2016Additive}, if and only if $\eta_T$ is constant.\footnote{Proof available upon request.}
	We discuss two features: one related to the $\eta_T$ parameter and another to the skewness and to the excess kurtosis of the calibrated exponential forward.

	\bigskip
	
	A first test is built to verify the adequacy of Sato processes. 
	Given the \textit{index of stability} for the model (e.g. chosen $\alpha$ in the Normal Tempered Stable model), it is possible to compute skewness and kurtosis \citep[see, e.g.][p.129]{Cont}. For example the ATS NIG skewness is \begin{align*}
	\frac{ \mathbb{E}\left[\left(f_T-\mathbb{E}\left[f_T\right]\right)^3\right]}{\left( Var(f_T)\right)^\frac{3}{2}}=-\frac{3{\sigma}_T^4\left({\eta_T} +\frac{1}{2}\right)k_T T +2{\sigma_T}^6\left({\eta}_T +\frac{1}{2}\right)^3k_T^2T}{\left({\sigma_T}^2T+{k_T}T{\sigma_T}^4\left(\eta_T+\frac{1}{2}\right)^2\right)^\frac{3}{2}}\;\;.
	\end{align*}
	A Sato process has skewness and kurtosis constant over time, as it can be deduced by definition \citep[see, e.g.][]{carr2007self}. 
	
	We analyze the term structure of these higher-order moments observed in our dataset adopting the same procedure of  \citet{konikov2002option}. 
	For both indices, we observe a linear behavior of skewness and kurtosis w.r.t. the squared root of the maturity as shown in Figure \ref{figure:NIGSP} in the NIG case. 
	In the Figure, we have plotted also the confidence interval chosen equal to two times the standard deviation, respectively, of the skewness and the kurtosis (cf. Appendix B for the methodology adopted to obtain these standard deviations).
	
	The statistical test is simple. We perform a linear regression statistical analysis of the higher moments behavior w.r.t. the square root of the time to maturity $\sqrt{T}$: 
	we reject the null hypothesis of no slope in all of the cases that we analyze (both indices and both tempered stable models; that is, NIG and VG)  with p-values of the order of $10^{-16}$. Similar results hold in all ATS cases.

	\begin{center}
		\begin{minipage}[t]{1\textwidth} \centering
			\includegraphics[width=0.9\textwidth]{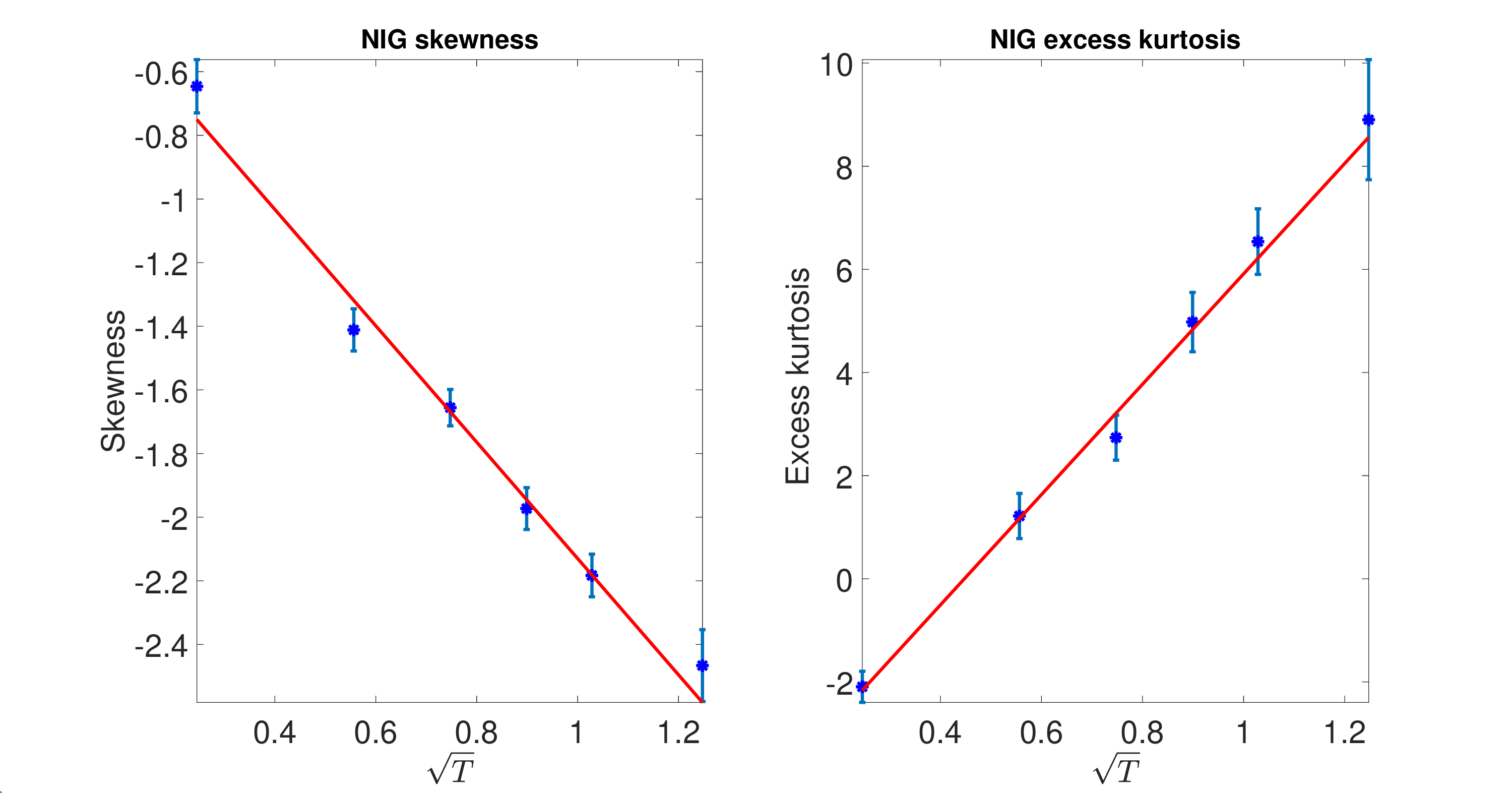}
			\captionof{figure}{\small
				Observed time-dependent  skewness (kurtosis) w.r.t.  $\sqrt{T}$ for the NIG calibrated model on S\&P 500 volatility surfaces. 
				We plot a confidence interval equal to two times the standard deviation.  The behavior is not consistent with a Sato process.}\label{figure:NIGSP}
		\end{minipage}
		\bigskip
		
	\end{center}
	
	The other statistical test aims to verify the 
	adequacy of additive processes obtained through additive subordination \citep{li2016Additive} in volatility surface calibration. 
	As already mentioned the ATS process, when ${\hat \eta}_\theta$ is equal to a constant $\bar{\eta}$, falls within this class.
	
	In Figure  \ref{figure:SPPNIG} and \ref{figure:EUVG} we have already shown the time scaling ${\hat \eta}_\theta$.
	We can statistically test the null hypothesis of constant  ${\hat \eta}_\theta$.
	For both volatility surfaces and for both tested tempered stable models (NIG and VG) we reject the null hypothesis of a constant ${\hat \eta}_\theta$ with p-values below $10^{-7}$.
	As already observed, ATS processes are characterized by a power-law scaling in ${\hat \eta}_\theta$.

	\bigskip
	
	\subsection{Robustness tests}
	In this Subsection, we perform a robustness analysis of the results in Section {\bf \ref{section:calibration}}. We repeat the analysis on four other days, both on the S\&P 500 and EURO STOXX 50 volatility surfaces. We show that the excellent calibration features of the ATS and the power-law scaling properties, observed on the $30^{th}$ of May 2013, arise also in these other dates.\\

	In these robustness tests, we use  bid and ask close prices for the S\&P 500 and EURO STOXX 50 options on the $29^{th}$ of November 2012 (6 months before the date of the analysis of Section {\bf \ref{section:calibration}}, the $30^{th}$ of  May 2013), the $27^{th}$ of February 2013 (3 months before), the $30^{th}$ of August 2013 (3 months after), and the $29^{th}$ November 2013 
	(6 months after).\footnote{These are the penultimate business days of November 2012, February 2013, August 2013, and November 2013.} 
	The dataset includes the bootstrapped risk-free rate curve. The data is provided by Reuters Datastream (option data) and Reuters Eikons (rate data). Let us observe that close prices are, in general, less accurate than open market prices (the ones used for the analysis in Section {\bf \ref{section:calibration}}).\\
	
	In Table \ref{tab:calibration_comparison}, we report calibration performances for the S\&P 500 and EURO STOXX $50$
	in terms of MSE and MAPE on the four dates considered.
	In the NIG and VG cases, we consider the standard L\'evy process, the Sato process, and the corresponding ATS process.  
	In all considered cases, Sato processes perform better than L\'evy processes but ATS improvement is far more significant: on average,
	two orders of magnitude for MSE and one order of magnitude for MAPE. These results appear coherent with the ones reported in Table \ref{tab: MSE_APE}.\\

	\begin{adjustbox}{width=\columnwidth,center}

		\begin{tabular} {|cc|ccc|ccc|ccc|ccc|}
			\toprule

			& & \multicolumn{3}{c|}{MSE} &\multicolumn{3}{c|}{MAPE}& \multicolumn{3}{c|}{MSE} &\multicolumn{3}{c|}{MAPE} \\
			\hline
			Index & Model&  L\'evy &Sato & ATS  & L\'evy &Sato  & ATS & L\'evy &Sato & ATS  & L\'evy &Sato  & ATS   \\ 
			\hline 
			& &  \multicolumn{6}{c|}{$29^{th}$ of November 2012 (-6 months)}&  \multicolumn{6}{c|}{$27^{th}$ of February 2013 (-3 months)}\\ 
			\hline
			S\&P 500&NIG & ${ 4.78}$ & $ 1.15  $ &${\bf 0.38}$ & $ 2.93\%   $&$1.36\%$ &${\bf 0.60\%}$ & ${10.77}$ & $ 4.31  $ &${\bf 0.52}$ &$3.65\%$ & $ 3.30\%   $ &${\bf 0.66\%}$\\
			S\&P 500 &VG &  $11.04$ & $1.00$ &${\bf 0.38}$ & $4.46\%$ & $1.36\%$ &${\bf 0.71 \%}$ & $18.28$ & $3.74$ &${\bf 0.48}$ & $4.81\%$ & $2.16\%$ &${\bf 0.69 \%}$\\
			Euro Stoxx 50 &NIG & $20.64$ & $19.73$ &${\bf 0.26}$& $2.39\%$ & $ 2.29\%$ &$ {\bf 0.18\%}$&$54.54$ & $19.99$ &${\bf 0.15}$& $3.79\%$ & $ 2.47\%$ &$ {\bf 0.15\%}$\\
			Euro Stoxx 50 &VG &  $34.51$ & $20.65$ &$ {\bf 0.41}$ & $3.05\%$ & $2.38\%$ &${\bf 0.31\%}$&  $90.81$ & $19.25$ &$ {\bf 0.38}$ & $4.91\%$ & $2.47\%$ &${\bf 0.24\%}$ \\
			\hline 
			& &  \multicolumn{6}{c|}{ $30^{th}$ of August 2013 (+3 months)}&  \multicolumn{6}{c|}{$29^{th}$ November 2013 (+6 months)}\\ 
			\hline
			S\&P 500&NIG & ${ 8.27}$ & $ 1.08  $ &${\bf 0.18}$ &$3.29\%$ & $ 1.21\%   $ &${\bf 0.12\%}$ & ${ 10.23}$ & $ 1.42  $ &${\bf 0.01}$ &$3.50\%$ & $ 1.32\%   $ &${\bf 0.09\%}$\\
			S\&P 500 &VG &  $18.37$ & $0.98$ &${\bf 0.37}$ & $ 4.95\%$ & $1.16\%$ &${\bf 0.20 \%}$ & $16.80$ & $1.36$ &${\bf 0.09}$ & $ 4.52\%$ & $1.30\%$ &${\bf 0.35 \%}$\\
			Euro Stoxx 50 &NIG & $40.98$ & $ 5.03$ &${\bf 1.53}$& $2.68\%$ & $ 0.93\%$ &$ {\bf 0.44\%}$&$24.22$ & $ 12.03$ &${\bf 0.50}$& $2.38\%$ & $1.73\%$ &$ {\bf 0.27\%}$\\
			Euro Stoxx 50 &VG &  $59.23$ & $4.81$ &$ {\bf 0.64}$ & $3.26\%$ & $0.94\%$ &${\bf 0.32\%}$&  $57.25$ & $12.66$ &$ {\bf 0.91}$ & $3.75\%$ & $1.77\%$ &${\bf 0.45\%}$ \\
			\bottomrule		
		\end{tabular}

	\end{adjustbox}

	\captionof{table}{\small Calibration performance for the S\&P 500 and EURO STOXX $50$
		in terms of MSE and MAPE on the $29^{th}$ of November 2012  (6 months before the date of the analysis), the $27^{th}$ of February 2013 (3 months before), the $30^{th}$ of August 2013 (3 months after), and the $29^{th}$ November 2013 (6 months after).
		In the NIG and VG cases, we consider the standard L\'evy process, the Sato process, and the corresponding ATS process, as in Table \ref{tab: MSE_APE}.  
		In all considered cases, Sato processes perform better than L\'evy processes but ATS improvement is far more significant: 
		two orders of magnitude for MSE and one order of magnitude for MAPE.  \label{tab:calibration_comparison}
	}	\bigskip

	In Figure \ref{figure:MSEcomparison_dates}, we plot the MSE w.r.t. the different times to maturity (in years) for S\&P 500 volatility surface calibrated with a NIG process on the four considered dates. 
	We observe that Sato (circles) and L\'evy (triangles) have a MSE of the same order of magnitude, while the improvement of ATS (squares) is, on average, of two orders of magnitude and particularly significant at short-time.
	\begin{center}

		\begin{minipage}[t]{1\textwidth}
			\centering
			\centerline{\includegraphics[width=1.15\textwidth]{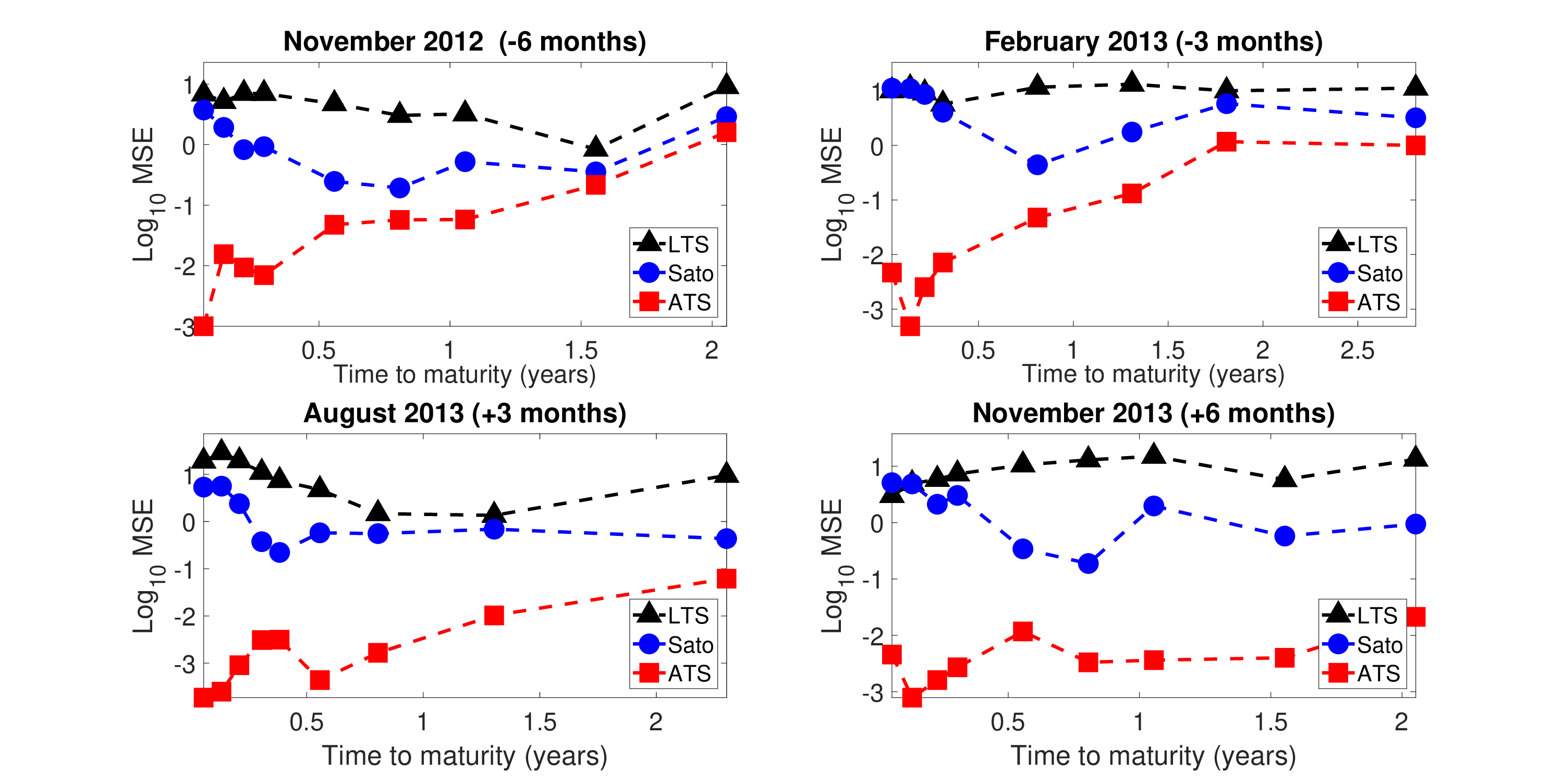} }
			\captionof{figure}{\small MSE w.r.t. the different times to maturity (in years) for S\&P 500 volatility surface calibrated with a NIG process on the $29^{th}$ of November 2012  (6 months before), the $27^{th}$ of February 2013 (3 months before), the $30^{th}$ of August 2013 (3 months after), and the $29^{th}$ November 2013 (6 months after). 
				Sato (circles) and L\'evy (triangles) have a MSE of the same order of magnitude, while the improvement of ATS (squares) is, on average, of two orders of magnitude and particularly significant at short-time.} \label{figure:MSEcomparison_dates}
		\end{minipage}
	\end{center}
	In Figures \ref{figure:volMarch} and \ref{figure:volJuly}, we plot the	implied volatility smile for S\&P 500 on the $29^{th}$ of November 2012 (time to maturity of 22 days on the left and of 9 months and 22 days on the right) and on the $29^{th}$ of November 2013 (time to maturity of 21 days on the left and of 9 months and 21 days on the right). The NIG ATS process, Sato process, and LTS process implied volatility are plotted together with the market-implied volatility. 
	As in the case of the $30^{th}$ of May 2013 (cf. Figure \ref{figure:vol}) the ATS reproduces the smiles significantly better than the alternatives, the improvement is particularly evident for small maturities.
	\begin{center}
		\begin{minipage}[t]{1\textwidth}
			\centering
			\includegraphics[width=0.9\textwidth]{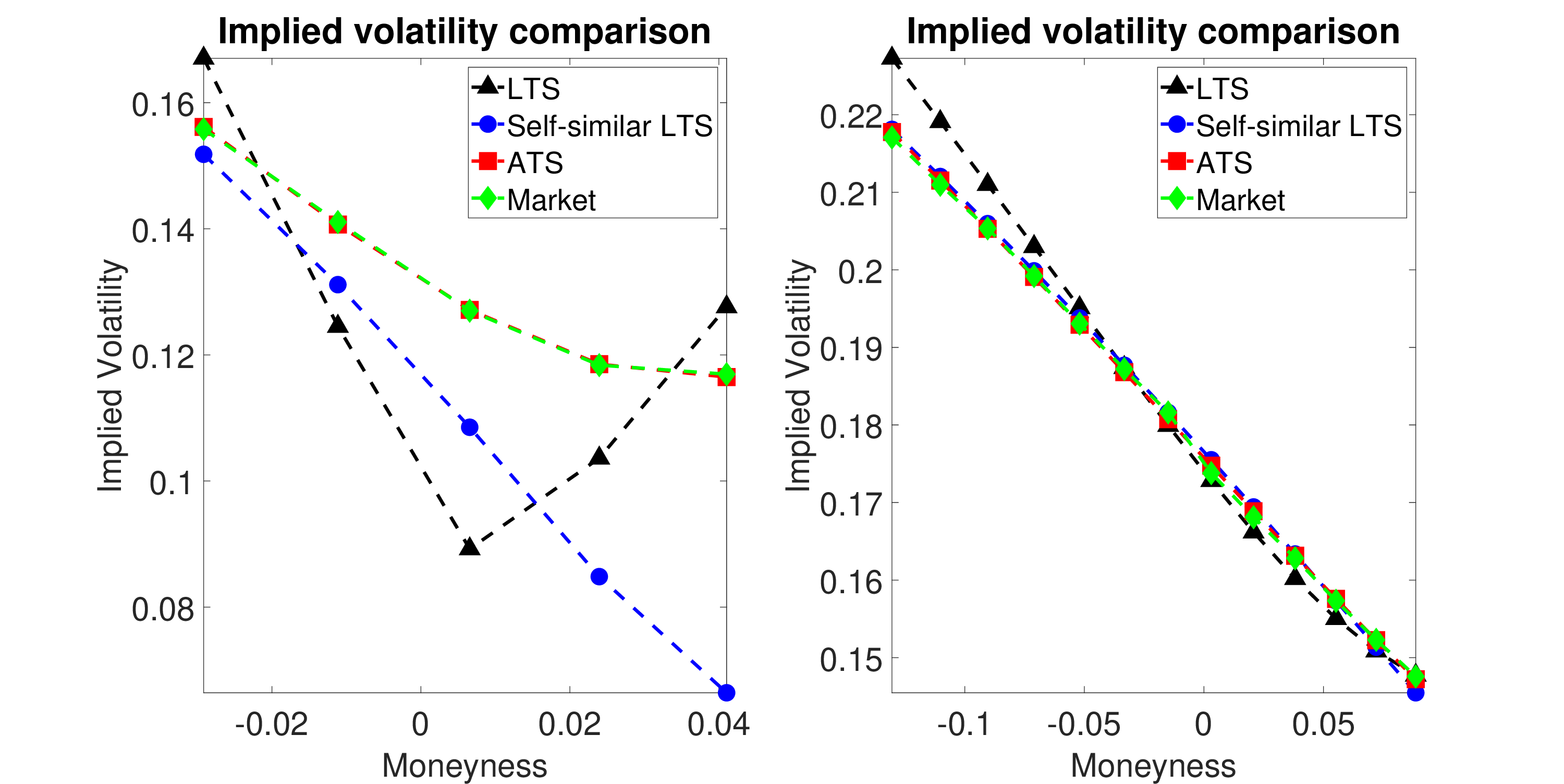} 
			\captionof{figure}{\small 
				Implied volatility smile for S\&P 500 at the $29^{th}$ of November 2012 (6 months before) for a given time to maturity: 22 days (on the left) and 9 months and 22 days (on the right). 
				The NIG ATS process, Sato process, and LTS process implied volatility are plotted together with the market-implied volatility. 
				ATS reproduces the smile significantly better than the alternatives, the improvement is particularly evident for small maturities.}\label{figure:volMarch}
		\end{minipage}
	\end{center}
	\begin{center}
		\begin{minipage}[t]{1\textwidth}
			\centering
			\includegraphics[width=0.9\textwidth]{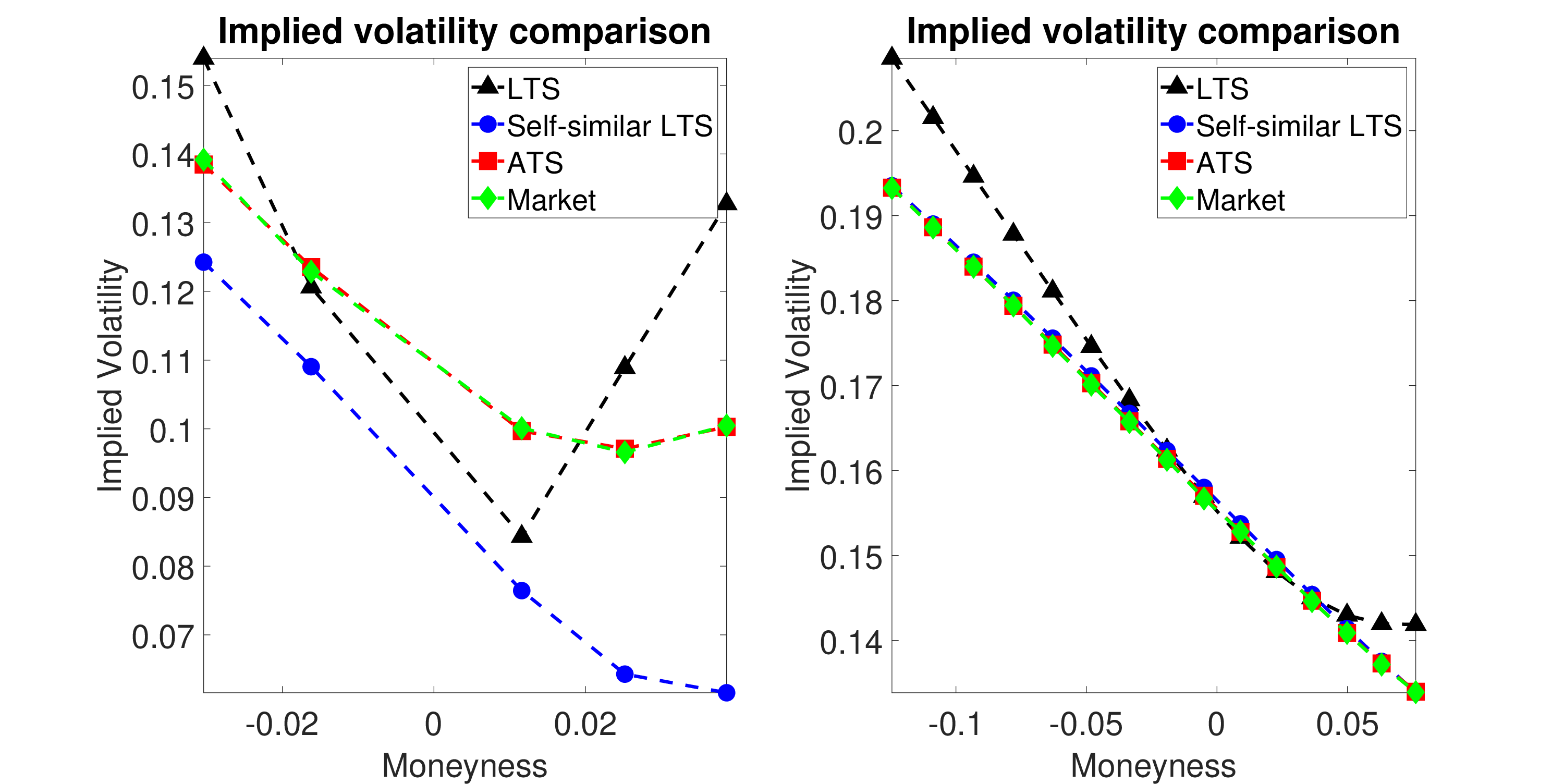} 
			\captionof{figure}{\small 
				Implied volatility smile for S\&P 500 at the $29^{th}$ of November 2013 (6 months after) for a given time to maturity: 21 days (on the left) and 9 months and 21 days (on the right). 
				The NIG ATS process, Sato process, and LTS process implied volatility are plotted together with the market-implied volatility. 
				ATS reproduces the smile significantly better than the alternatives, the improvement is particularly evident for small maturities.}\label{figure:volJuly}
		\end{minipage}
	\end{center}
	In Figure \ref{figure:skew_comparison_dates}, we plot the market and the ATS NIG implied volatility \textit{skew}  for EURO STOXX 50 w.r.t. the times to maturity on the $27^{th}$ of February and on the $30^{th}$ of August 2013.  In both cases, the calibrated ATS replicates accurately the market implied volatility \textit{skew}, as already observed in Section \ref{section:calibration} for the $30^{th}$ of May 2013. Similar results hold for the other two dates and in the S\&P 500 case.
		\begin{center}
		\begin{minipage}[t]{1\textwidth}
			\centering
			\includegraphics[width=0.9\textwidth]{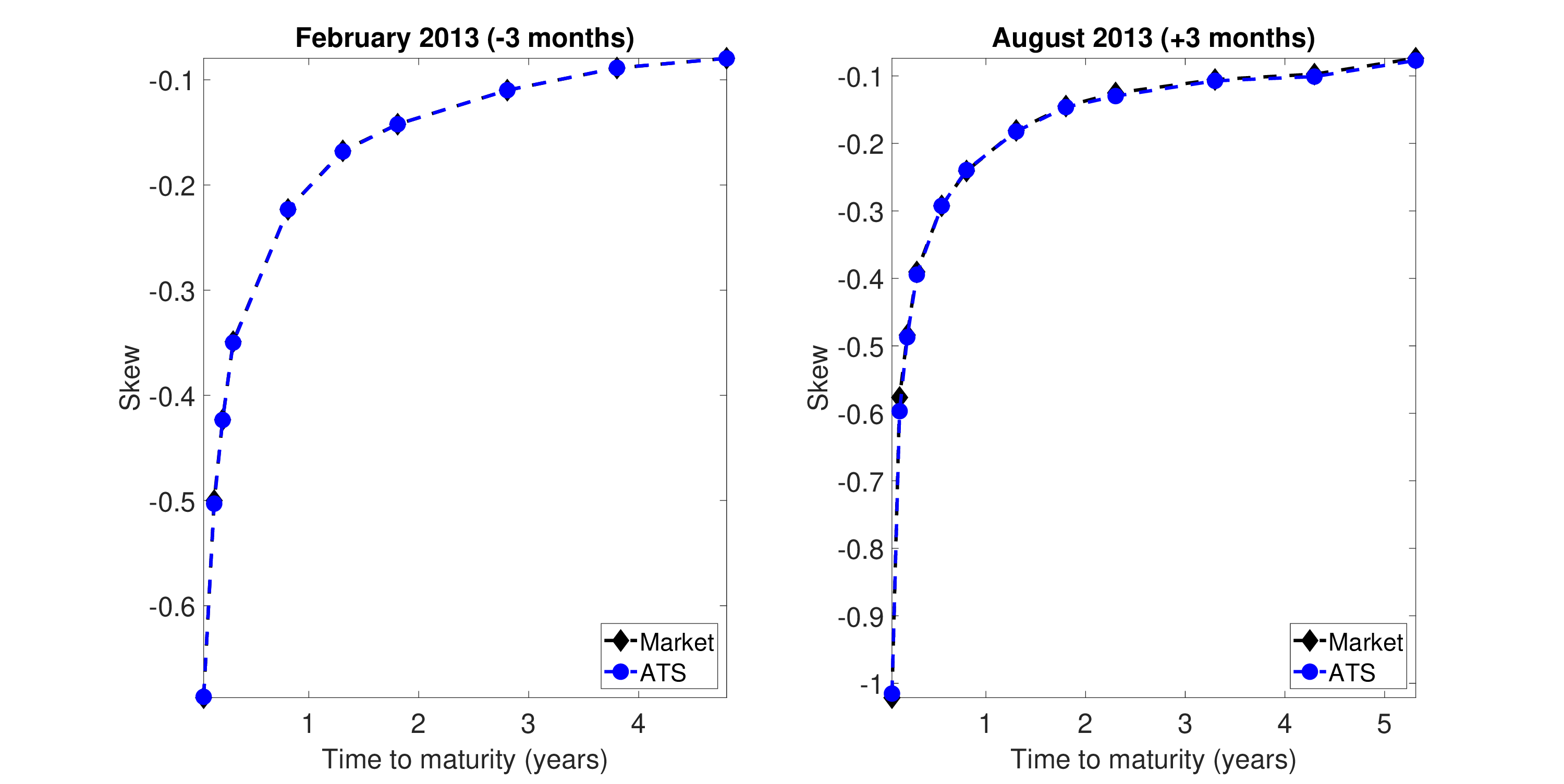} 
			\captionof{figure}{\small  The market and the ATS NIG implied volatility \textit{skew}  for EURO STOXX 50 w.r.t. the times to maturity on the $27^{th}$ of February and on the $30^{th}$ of August 2013. Again, ATS replicates the market implied implied volatility \textit{skew}.
			}\label{figure:skew_comparison_dates}
		\end{minipage}
	\end{center}
	In Figures \ref{figure:SPPNIG_comparison} and \ref{figure:EUVG_comparison}, we plot the weighted regression lines and the observed time-dependent parameters $\ln {\hat k}_\theta$ and $\ln {\hat \eta}_\theta$ 
	with their confidence intervals 
	for S\&P $500$ and EURO STOXX $50$ on the $29^{th}$ of November 2012 (on the top) and on the $29^{th}$ of November 2013 (on the bottom).\footnote{Results for the $27^{th}$ of February 2013 and the $30^{th}$ of August 2013 are available upon request.}
	The confidence intervals are two times the standard deviations of $\ln {\hat k}_\theta$,  of
	$\ln{\hat \eta}_\theta$ and of $\ln \theta$. In both days, the observed scalings of $\hat{k}_\theta$ and $\hat{\eta}_\theta$ are equivalent to the ones observed in  Figures \ref{figure:SPPNIG} and \ref{figure:EUVG}.\\
	\begin{center}
		\begin{minipage}[t]{1\textwidth}
			\centering
			\centerline{\includegraphics[width=1.15\textwidth]{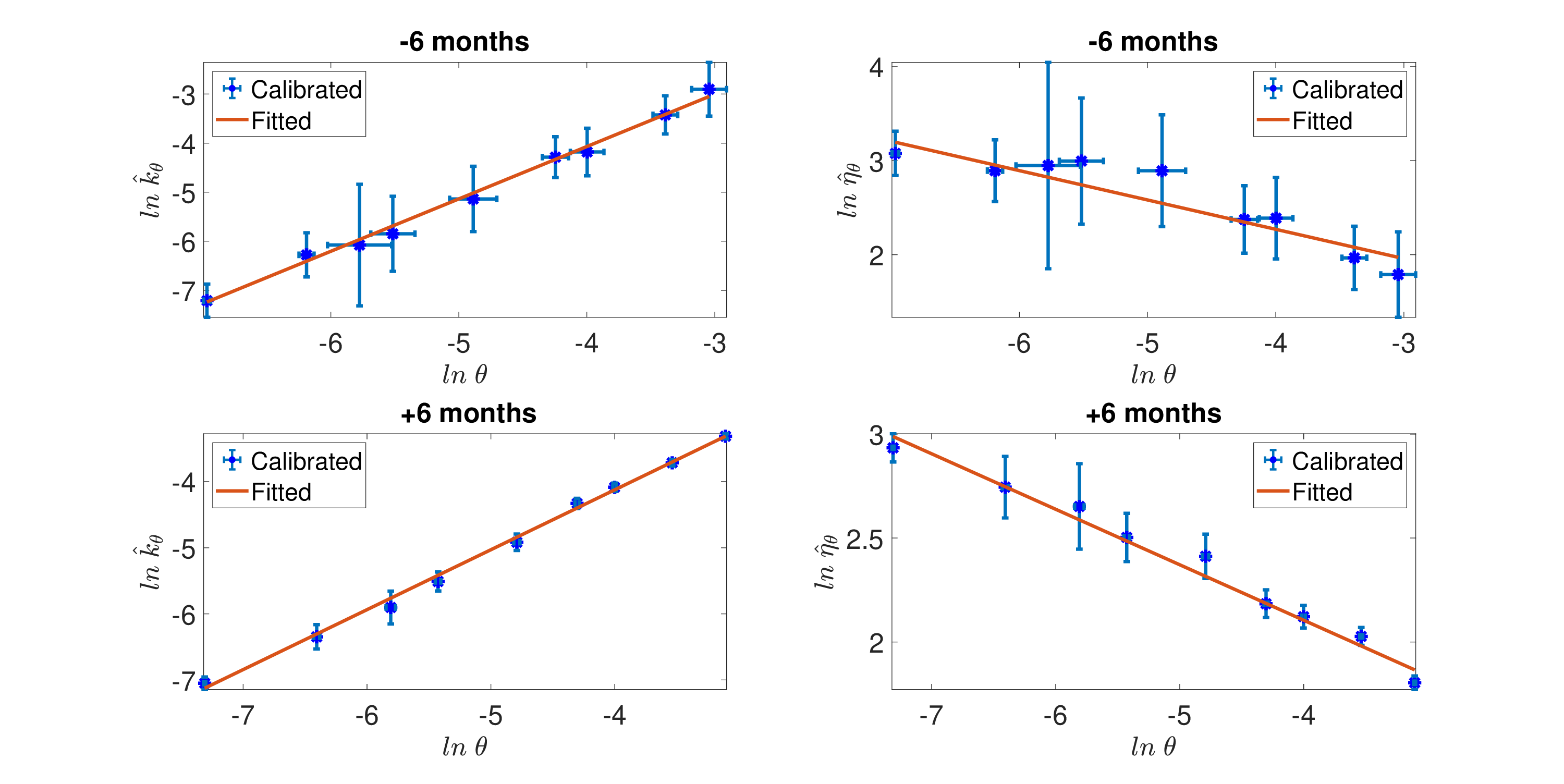}} 
			\captionof{figure}{\small Weighted regression line and the observed time-dependent parameters $\ln {\hat k}_\theta$ and $\ln {\hat \eta}_\theta$ w.r.t. $\ln\theta$ for the NIG calibrated model for S\&P 500 on   the $29^{th}$ of November 2012 (on the top) and on the $29^{th}$ of November 2013 (on the bottom). We plot a confidence interval equal to two times the corresponding standard deviation. Notice that, also in this case,  confidence intervals on $\ln {\hat k}_\theta$  and $\ln {\hat \eta}_\theta$  are one order of magnitude wider than confidence intervals on $\ln \theta$. The observed scalings of $\hat{k}_\theta$ and $\hat{\eta}_\theta$ are equivalents to the ones observed in  Figures \ref{figure:SPPNIG} and \ref{figure:EUVG}. The values of $\theta$ correspond to times to maturity that goes from tree weeks to two years and a half (all available maturities on Reuters Datastream  dataset).}\label{figure:SPPNIG_comparison}
		\end{minipage}\\ \bigskip
		\begin{minipage}[t]{1\textwidth}
			\centering
			\centerline{\includegraphics[width=1.15\textwidth]{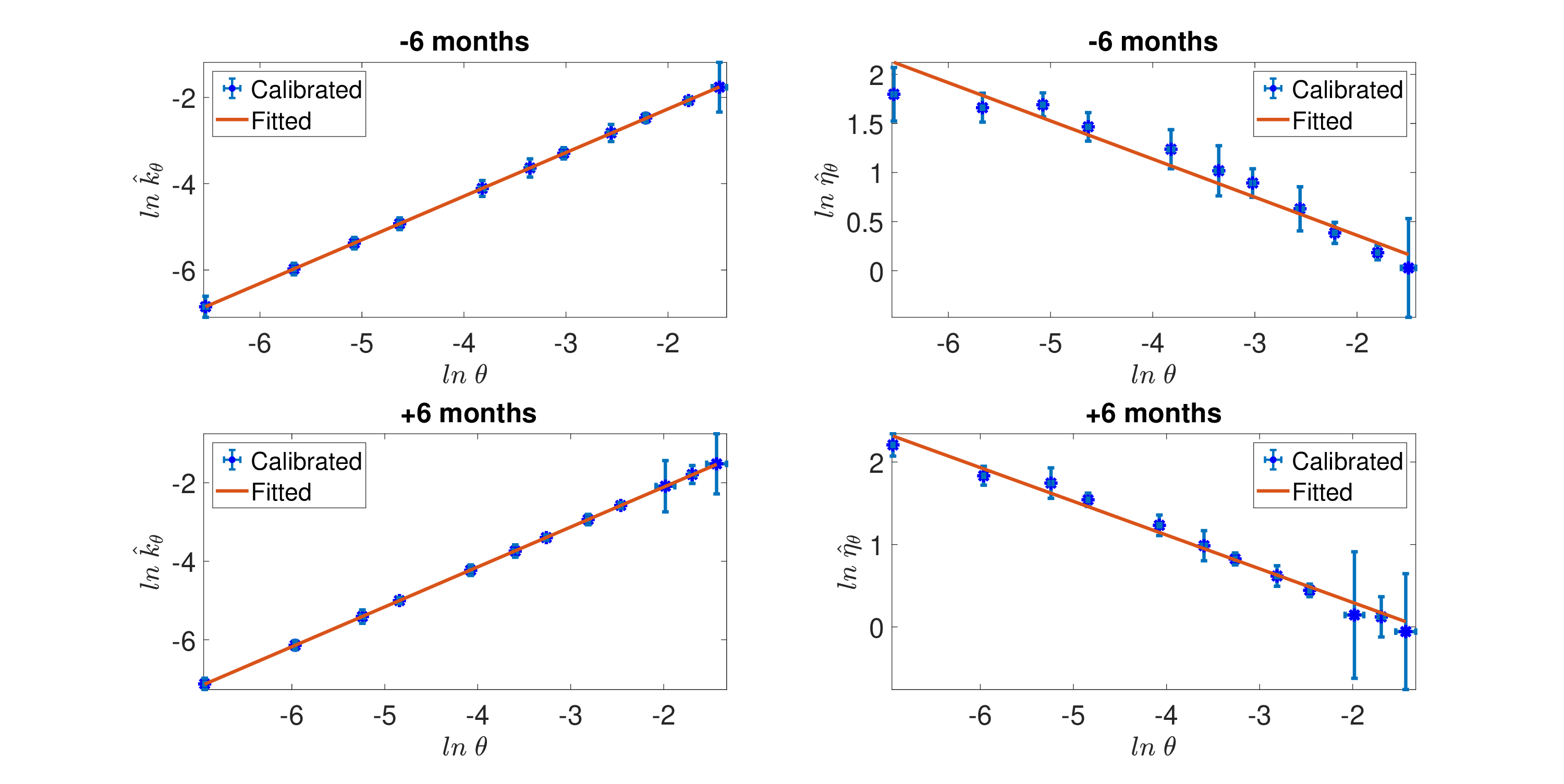}}
			\captionof{figure}{\small	\small Weighted regression line and the observed time-dependent parameters $\ln {\hat k}_\theta$ and $\ln {\hat \eta}_\theta$ w.r.t. $\ln\theta$ for the VG calibrated model for EURO STOXX 50 on   the $29^{th}$ of November 2012 (on the top) and on the $30^{th}$ of November 2013 (on the bottom). We plot a confidence interval equal to two times the corresponding standard deviation. Notice that, also in this case,  confidence intervals on $\ln {\hat k}_\theta$  and $\ln {\hat \eta}_\theta$  are one order of magnitude wider than confidence intervals on $\ln \theta$. The observed scalings of $\hat{k}_\theta$ and $\hat{\eta}_\theta$ are equivalents to the ones observed in  Figures \ref{figure:SPPNIG} and \ref{figure:EUVG}.  The values of $\theta$ correspond to times to maturity that goes from tree weeks to five years (all available maturities on Reuters Datastream dataset).}\label{figure:EUVG_comparison}
		\end{minipage}\\ \bigskip	
	\end{center}
	The results presented in Table \ref{tab:calibration_comparison} and in Figures 9-14 are equivalent to the one of Section \ref{section:calibration}. This analysis confirms the robustness, over a one-year time interval, of the excellent calibration performances and the power scaling behavior of the ATS.
	\section{Conclusions}
	\label{section:conclusions}
	
	In this paper, we introduce a new broad family of stochastic processes that we call additive normal tempered stable processes (ATS).
	An interesting subcase of ATS presents a power-law scaling of the time-dependent parameters.
	
	\bigskip
	
	We have considered all quoted options on S\&P 500 and EURO STOXX 50 at 11:00 am New York Time on the $30^{th}$ of May 2013.
	The dataset considers options with a time to maturity starting from three weeks and up to several years.
	We calibrate the ATS processes on the options of both indices,
	showing that ATS  processes present better calibration features than LTS and Sato processes.
	The observed improvement of ATS is even of two orders of magnitude in terms of MSE, as reported in Table \ref{tab: MSE_APE}. ATS replicates accurately market implied volatility term structure and \textit{skew} as observed in Figures \ref{figure:vol} and \ref{figure:skew}.
	
	The quality of ATS calibration results looks quite incredible.
	In Subsection {\bf \ref{section:scaling}},
	we have shown that once the volatility term structure has been taken into account, 
	the whole implied volatility surface is calibrated accurately with only two free parameters.

	We also construct a re-scaled ATS process via a time-change based on the implied volatility term structure.
	We show that the re-scaled process calibrated parameters exhibit a power-law behavior. 
	The statistical relevance of the scaling properties is discussed in detail. 
\bigskip
	
	We have compared some model characteristics with the two alternative additive processes present in the financial literature.
	These two classes fail to reproduce some stylized facts observed in market data, which are adequately described by ATS processes.\\
	Finally, we have verified the robustness, over a one-year time interval, of the excellent calibration performances and the power scaling behavior of the ATS.\\

	As for future research, two main promising directions appear evident. 
	First, it can be interesting to extend ATS processes to the commodity asset class, in general, and to the oil \citep{shiraya2011pricing,kyriakou2016jumps} 
	and freight markets \citep{prokopczuk2011pricing,nomikos2013freight}, in particular;
this model extension should allow for mean-reversion and seasonality patterns in prices, which are typically found in empirical studies \citep[see, e.g.][and references therein]{benth2014quantitative}.
	Second,  it would be worthy to develope a fast and reliable simulation algorithm for ATS processes \citep[see, e.g.][]{azzone2021additive}
	and to study pricing techniques for exotic derivatives 
	\citep[e.g., generalizing the techniques for path-dependent exotics products, as Asian options, in][through the characteristic function of the ATS increments]{fusai2008pricing,vcerny2011improved,fusai2016general}.

	\section*{Acknowledgements}
	We thank P. Carr, J. Gatheral, and F. Benth for enlightening discussions on this topic.  We thank also G. Guatteri, M.P. Gregoratti, J. Guyon, P. Spreij, and all participants to WSMF 2019 in Lunteren, to VCMF 2019 in Vienna, and to AFM 2020 in Paris. We are grateful to the Editor and the Referees for their useful comments and careful review. 
	R.B. feels indebted to P. Laurence for several helpful and wise suggestions on the subject.

	\bibliography{sources}
	\bibliographystyle{tandfx}
	
	\bigskip
	\bigskip
	\bigskip
	\bigskip

	\section*{Notation}
	
	\begin{center}

		\begin{tabular} {|c|l|}
			\toprule
			\textbf{Symbol}& \textbf{Description}\\ \bottomrule
			$A_t$& diffusion term of the additive process $\left\{X_t \right\}_{t\geq 0}$\\
			$B_T$ & discount factor between value date and $T$ \\
			$\mathbb{B}(\mathbb{R})$ & Borel sigma algebra on $\mathbb{R}$ \\ 
			$C\left(T, K\right)$ & call option price at value date with maturity $T$ and strike $K$\\
			$\left\{f_t\right\}_{t\geq 0}$& ATS process that models the forward exponent \\
			$\left\{\hat{f}_\theta\right\}_{\theta\geq 0}$& re-scaled ATS process w.r.t. the time $\theta=\sigma^2_T \; T$\\
			$F_t (T)$&  price  at time $t$ of a Forward contract with maturity $T$\\
			$k$ & variance of jumps of LTS \\
			$k_t$ &  variance of jumps of ATS\\
			$\hat{k}_\theta$ & re-scaled  variance of jumps of ATS\\
			$\bar{k}$ & constant part of the re-scaled  variance of jumps of ATS $\hat{k}_\theta$\\
			${\cal L}_t$& Laplace transform of the subordinator $S_t$ in (\ref{eq:lap_transf})\\ 
			$\left\{S_t\right\}_{t\geq 0}$& L\'evy subordinator \\
			$T$& option time to maturity\\
			$W_t$ & Brownian motion\\
			$x$&  option moneyness\\
			$ \alpha$ & \textit{Index of stability}: tempered stable parameter of ATS, $\alpha\in [0,1)$]\\
			$ \beta$& scaling parameter of $\hat{k}_\theta$\\
			$\gamma_t$ & drift term of additive process $\left\{X_t \right\}_{t\geq 0}$\\ 
			$\Gamma(*)$ & Gamma function evaluated in $*$\\
			$\delta$ & scaling parameter of $\hat{\eta}_\theta$\\
			$\varphi$&  deterministic drift term of  LTS\\
			$\varphi_t$&  deterministic drift term of ATS \\
			$\phi^c  $ &  characteristic function of the forward exponent\\
			$\eta$ &   \textit{skew} parameter of LTS \\
			$\eta_t$ &  \textit{skew} parameter of ATS\\
			$\hat{\eta}_\theta$ & re-scaled  \textit{skew} parameter of ATS \\
			$\bar{\eta}$ & constant part of the  re-scaled ATS \textit{skew} parameter $\hat{\eta}_\theta$\\
			$\nu_t$ & L\'evy measure of the additive process $\left\{X_t \right\}_{t\geq 0}$\\
			$\sigma$ &  diffusion parameter of LTS \\
			$\sigma_t$ &   diffusion parameter of ATS\\
			$\hat{\sigma}_\theta$ &re-scaled diffusion parameter of ATS, equal to one\\
			$\bar{ \sigma}$ &  constant diffusion parameter of ATS\\
			$\theta $ & re-scaled maturity, defined as $\sigma^2_T \; T$ \\
			\bottomrule
		\end{tabular}
		\thispagestyle{plain}
	\end{center}	
	
	\clearpage
	\begin{appendices}
		\section{Proofs}

		We  start proving a technical \textbf{Lemma} that we use in the proof of \textbf{Theorem \ref{theorem:f_Additive}}.
		
		\begin{lemma}\label{lemma:gamma_t}$ $\\
			If $ \lim_{t\to 0} t\,\sigma^2_t\,\eta_t =0$, then
			\[\lim_{t\to 0} \mathop{\mathlarger{\int}}^\infty_0{ds\displaystyle \frac{t}{\Gamma(1-\alpha)}\left({\frac{1-\alpha}{k_t}}\right)^{1-\alpha}\left(\frac{e^{-\left(1-\alpha\right) \; s/ k_t }}{s^{1+\alpha}}\right)}\mathop{\mathlarger{\int}}_{|x|<1}dx\frac{x}{\sqrt{2 \pi s}{\sigma}_t} e^{-\left(\frac{x+s{\sigma}_t^2\left(\eta_t+1/2\right)}{\sqrt{s}{\sigma}_t}\right)^2}=0\;\;.\] 
		\end{lemma}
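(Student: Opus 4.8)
The goal is to show that a certain double integral vanishes as $t\to0$, where the inner integral over $x\in(-1,1)$ is the truncated first moment of a Gaussian density with mean $m_t(s):=-s\sigma_t^2(\eta_t+1/2)$ and variance $s\sigma_t^2$, and the outer integral integrates this against the L\'evy density of the tempered stable subordinator $S_t$ (the factor $\tfrac{t}{\Gamma(1-\alpha)}(\tfrac{1-\alpha}{k_t})^{1-\alpha}e^{-(1-\alpha)s/k_t}s^{-1-\alpha}$). The natural reading is that this expression is exactly (the truncated part of) the drift term $\gamma_t$ of $f_t$ appearing in the L\'evy--Khintchine representation of the normal tempered stable law, obtained by conditioning the Brownian motion on the subordinator and integrating the jump size $x$ over $|x|<1$. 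So the statement is really: under the hypothesis $t\sigma_t^2\eta_t\to0$, the small-jump-truncated drift contribution of the Brownian part vanishes in the limit.

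The plan is to bound the inner $x$-integral. Write the inner integral as $\int_{|x|<1} x \,p_{m_t(s),\,s\sigma_t^2}(x)\,dx$ where $p_{m,v}$ is the $N(m,v)$ density. Split $x = (x-m_t(s)) + m_t(s)$. The first piece, $\int_{|x|<1}(x-m_t(s))\,p_{m_t(s),s\sigma_t^2}(x)\,dx$, is the truncated centered Gaussian moment; by the standard identity $\int_a^b (x-m)p_{m,v}(x)\,dx = v\,(p_{m,v}(a)-p_{m,v}(b))$ this equals $s\sigma_t^2\big(p_{m_t(s),s\sigma_t^2}(-1)-p_{m_t(s),s\sigma_t^2}(1)\big)$, which is bounded in absolute value by $s\sigma_t^2\cdot\frac{1}{\sqrt{2\pi s\sigma_t^2}}\cdot C = C\sqrt{s}\,\sigma_t$ (in fact with exponential decay in $s$ once $s$ is large, from the Gaussian tail, which will make the $s$-integral converge). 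The second piece contributes $m_t(s)\cdot\Pr[|X|<1]$ with $X\sim N(m_t(s),s\sigma_t^2)$, so its absolute value is at most $|m_t(s)| = s\sigma_t^2|\eta_t+1/2|$. Substituting these bounds into the outer integral, one gets a bound of the form
\[
\text{const}\cdot t\Big(\frac{1-\alpha}{k_t}\Big)^{1-\alpha}\Big[\,\sigma_t\!\int_0^\infty \frac{e^{-(1-\alpha)s/k_t}}{s^{1/2+\alpha}}\,ds \;+\; \sigma_t^2|\eta_t+\tfrac12|\!\int_0^\infty \frac{e^{-(1-\alpha)s/k_t}}{s^{\alpha}}\,ds\Big].
\]

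The two $s$-integrals are Gamma integrals: $\int_0^\infty s^{-1/2-\alpha}e^{-(1-\alpha)s/k_t}\,ds = \Gamma(1/2-\alpha)\big(k_t/(1-\alpha)\big)^{1/2-\alpha}$ (valid for $\alpha<1/2$; the cases $\alpha\in[1/2,1)$ need the exponential cutoff at $s$ near $0$ to be handled by noting $|x|<1$ also truncates, but since here $\alpha\in\{0,1/2\}$ in the applications and more generally one may restrict to $\alpha<1/2$ or absorb the boundary term, I would simply remark the $\alpha=1/2$ borderline is covered by the same estimate with a logarithmic factor or by using the finer Gaussian-tail bound that forces $s\lesssim k_t$), and $\int_0^\infty s^{-\alpha}e^{-(1-\alpha)s/k_t}\,ds = \Gamma(1-\alpha)\big(k_t/(1-\alpha)\big)^{1-\alpha}$. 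Plugging these back, the first term becomes $\propto t\,\sigma_t\,k_t^{-1/2}$ up to constants, and the second becomes $\propto t\,\sigma_t^2\,|\eta_t+1/2|$. Now $t\sigma_t^2 \to 0$ (since $\sigma_t$ is continuous and bounded near $0$, $t\sigma_t^2\to 0$), so $t\sigma_t k_t^{-1/2}=\sqrt{t\sigma_t^2}\cdot\sqrt{t/k_t}\cdot\sqrt{\sigma_t^2}$—hm, this needs care; better: the first term $t\sigma_t k_t^{-1/2} = (t\sigma_t^2)^{1/2}(t/k_t)^{1/2}$, which requires $t/k_t$ controlled. I would instead keep the exponential decay and not integrate all the way: the dominant scale is $s\sim k_t$, giving first term $\sim t\sigma_t\sqrt{k_t}\cdot k_t^{-1}= t\sigma_t k_t^{-1/2}$; using $t\sigma_t^2\to0$ handles the second term $t\sigma_t^2|\eta_t+1/2|\le t\sigma_t^2|\eta_t| + \tfrac12 t\sigma_t^2\to 0$ by hypothesis, and for the first term one uses the refined Gaussian bound $|p(-1)-p(1)|\le C e^{-c/(s\sigma_t^2)}/\sqrt{s\sigma_t^2}$ which restricts the effective $s$ to $s\gtrsim 1/\sigma_t^2$, killing the $k_t\to0$ issue. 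The main obstacle is this bookkeeping of the small-$s$ behavior of the subordinator density against the Gaussian factor: one must be careful that the outer integral converges and that its value is controlled by quantities the hypothesis actually bounds — the clean way is to use both Gaussian tails ($e^{-c x^2/(s\sigma_t^2)}$ for the $|x|<1$ truncation contributing smallness at small $s$, and $e^{-c m_t(s)^2/(s\sigma_t^2)} = e^{-c s\sigma_t^2(\eta_t+1/2)^2}$ for large $s$) so that the surviving factor is exactly $t\sigma_t^2\eta_t$-type, matching the hypothesis. Once the two bounds $\lesssim t\sigma_t^2(|\eta_t|+1)$-type are in place, the hypothesis $t\sigma_t^2\eta_t\to0$ together with $t\sigma_t^2\to0$ finishes the proof.
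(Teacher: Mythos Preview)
Your overall strategy is right, and in fact your ``second piece'' $m_t(s)\Pr[|X|<1]$ already delivers the bound the paper uses: after integrating in $s$ it gives exactly $t\,\sigma_t^2\,|\eta_t+\tfrac12|$, which tends to zero by the hypothesis together with the continuity of $\sigma_t$ (so that $t\sigma_t^2\to0$). The difficulty you create for yourself is the ``first piece''. Your crude bound
\[
\Bigl|\int_{|x|<1}(x-m_t(s))\,p_{m_t(s),s\sigma_t^2}(x)\,dx\Bigr|
\;\le\; C\sqrt{s}\,\sigma_t
\]
produces, after the outer integration, the divergent integral $\int_0^\infty s^{-1/2-\alpha}e^{-(1-\alpha)s/k_t}\,ds$ whenever $\alpha\ge 1/2$ --- precisely the NIG case the paper cares about. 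You notice this and sketch a repair via the Gaussian factor $e^{-c/(s\sigma_t^2)}$, but you do not carry it out, and making it work uniformly in $k_t$ and $\eta_t$ under the sole hypothesis $t\sigma_t^2\eta_t\to0$ is not automatic (e.g.\ when $|m_t(s)|$ is close to $1$ the endpoint values $p(\pm1)$ need not be exponentially small). So as written the argument has a genuine gap for $\alpha\ge 1/2$.

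The paper avoids the split $x=(x-m_t(s))+m_t(s)$ altogether with a one--line sign observation. Write $A:=\int_{|x|<1}x\,p(x)\,dx$ and $B:=\int_1^\infty x\,(p(x)-p(-x))\,dx$; then $A+B=\int_{\mathbb R}x\,p(x)\,dx$ is the full (unnormalised) mean, equal to a constant times $-s\sigma_t^2(\eta_t+\tfrac12)$. The point is that $A$ and $B$ have the \emph{same} sign: for $x>0$ one has $p(x)-p(-x)<0$ iff the mode $m_t(s)<0$, and the same inequality forces $A=\int_0^1 x(p(x)-p(-x))\,dx$ to be negative. Hence $|A|\le|A+B|=s\sigma_t^2\,|\eta_t+\tfrac12|$ (up to the harmless normalisation constant). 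Substituting this into the outer integral and using
\[
\int_0^\infty \frac{1}{\Gamma(1-\alpha)}\Bigl(\frac{1-\alpha}{k_t}\Bigr)^{1-\alpha} s^{-\alpha}\,e^{-(1-\alpha)s/k_t}\,ds \;=\; 1
\]
gives the clean bound $t\,\sigma_t^2\,|\eta_t+\tfrac12|\to 0$, valid for every $\alpha\in[0,1)$ and with no appeal to the behaviour of $k_t$ or to any refined Gaussian tail estimate. In short: drop the $(x-m_t(s))$ piece and instead bound the truncated moment by the full moment via the sign argument --- that removes the $\alpha\ge 1/2$ obstruction and makes your proof complete.
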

		\begin{proof}
			
			\begin{align*}
			\\
			&\left|\mathop{\mathlarger{\int}}_{|x|<1}dx\frac{x}{\sqrt{2 \pi s}{\sigma}_t} e^{-\left(\frac{x+s{\sigma}_t^2\left(\eta_t+1/2\right)}{\sqrt{s}{\sigma}_t}\right)^2}\right|\\ \leq &
			\left|\mathop{\mathlarger{\int}}_{|x|<1}dx\frac{x}{\sqrt{2 \pi s}{\sigma}_t} e^{-\left(\frac{x+s{\sigma}_t^2\left(\eta_t+1/2\right)}{\sqrt{s}{\sigma}_t}\right)^2}   +\mathop{\mathlarger{\int}}_{1}^\infty dx\frac{x}{\sqrt{2 \pi s}{\sigma}_t} \left(e^{-\left(\frac{x+s{\sigma}_t^2\left(\eta_t+1/2\right)}{\sqrt{s}{\sigma}_t}\right)^2}-e^{-\left(\frac{-x+s{\sigma}_t^2\left(\eta_t+1/2\right)}{\sqrt{s}{\sigma}_t}\right)^2}\right)  
			\right|\\
			=& \sigma_t^2 s\left|\frac{1}{2}+\eta_t\right|\;\;.
			\end{align*}
			The inequality is due to the fact that both terms inside the right-hand side absolute value are positive if $(1/2+\eta_t)$ is positive and are negative if $(1/2+\eta_t)$ is negative.  
			Now it is possible to write the following bound
			\begin{align*}
			&\left|\mathop{\mathlarger{\int}}^\infty_0{ds\displaystyle \frac{t}{\Gamma(1-\alpha)}\left({\frac{1-\alpha}{k_t}}\right)^{1-\alpha}\left(\frac{e^{-\left(1-\alpha\right) \; s/ k_t }}{s^{1+\alpha}}\right)}\mathop{\mathlarger{\int}}_{|x|<1}dx\frac{x}{\sqrt{2 \pi s}{\sigma}_t} e^{-\left(\frac{x+s{\sigma}_t^2\left(\eta_t+1/2\right)}{\sqrt{s}{\sigma}_t}\right)^2}\right|\\
			&\leq  \sigma_t^2 \left|\frac{1}{2}+\eta_t\right| \mathop{\mathlarger{\int}}^\infty_0{ds\displaystyle \frac{t}{\Gamma(1-\alpha)}\left({\frac{1-\alpha}{k_t}}\right)^{1-\alpha}\left(\frac{e^{-\left(1-\alpha\right) \; s/ k_t }}{s^{\alpha}}\right)}=t\sigma_t^2 \left|\frac{1}{2}+\eta_t\right|\;\;,
			\end{align*}
			where the last equality is due to the definition of $\Gamma(1-\alpha)$.
			We prove the thesis by the squeeze theorem
		\end{proof}
		
		\paragraph{Proof of Theorem \ref{theorem:f_Additive}}$ $\\
		The idea of this proof is 	to show that there exists an additive process with the characteristic function  in (\ref{laplace})  using the result in
		\citet[][ Th.9.8, p.52]{Sato}.

		At any given time $t>0$ the characteristic function in (\ref{laplace}) is the characteristic function of a LTS (\ref{laplace_levy}), at time $t$, with parameters ${k}=k_t$, ${ \eta}=\eta_t$, ${\sigma}=\sigma_t$ and $\varphi =\varphi_t$. Hence,  we have an expression for the generating triplet of (\ref{laplace})  \citep[see, e.g.][eq. 4.24, p.130]{Cont}\[
		\begin{cases}
		A_t&=0\\
		\gamma_t &=\mathop{\mathlarger{\int}}^\infty_0{ds\displaystyle \frac{t}{\Gamma(1-\alpha)}\left({\frac{1-\alpha}{k_t}}\right)^{1-\alpha}\left(\frac{e^{-\left(1-\alpha\right) \; s/ k_t }}{s^{1+\alpha}}\right)}\mathop{\mathlarger{\int}}_{|x|<1}dx\dfrac{x}{\sqrt{2 \pi s}{\sigma}_t} e^{-\left(\frac{x+s{\sigma}_t^2\left(\eta_t+1/2\right)}{\sqrt{s}{\sigma}_t}\right)^2}+t\varphi_t \\
		\nu_t(x)&=\dfrac{tC\left(\alpha,k_t,{\sigma}_t,\eta_t\right)}{|x|^{1/2+\alpha}}e^{-(\eta_t+1/2)x}K_{\alpha+1/2}\left(|x|\sqrt{{\left(1/2+\eta_t\right)^2+2(1-\alpha)/( k_t\,{\sigma}^2_t)}}\right)
		
		\end{cases}\;\;,
		\]
		with \[C\left(\alpha,k_t,{\sigma}_t,\eta_t\right):=\frac{2}{\Gamma(1-\alpha) \sqrt{2 \pi}}\left(\frac{1-\alpha}{k_t}\right)^{1-\alpha}{\sigma}^{2\alpha}_t\left(\left(1/2+\eta_t\right)^2+2(1-\alpha)/(k_t\, {\sigma}^2_t)\right)^{\alpha/2+1/4}\;\;,\]
		and 
		\[ K_\nu(z):=\frac{e^{-z}}{\Gamma\left(\nu+\frac{1}{2}\right)}\sqrt{\frac{\pi}{2 z}}\int_0^\infty e^{-s}s^{\nu-1/2}\left(\frac{s}{2z}+1\right)^{\nu-1/2}ds\;\;\]
		is the modified Bessel function of the second kind  \citep[see, e.g.][Ch.9 p.376]{abramowitz1948handbook}.
		For $t=0$, as usual in additive processes, we set $\gamma_0=0$, $A_0=0$ and $\nu_0=0$.\\ 
		\bigskip\\
		First, we verify that $\nu_t(x)$ is a non decreasing function of $t$.
		It is possible to identify two expressions in the jump measure $\nu_t(x)$ \begin{align}
		&e^{-x(1/2+\eta_t)-|x|\sqrt{\left(1/2+\eta_t\right)^2+ 2(1-\alpha)/( \sigma_t^2k_t)}}\label{eq:exponential} \\
		&\frac{ t^{1/\alpha}\sigma^2_t}{k_t^{(1-\alpha)/\alpha}}\left(\frac{s }{|x|}+\sqrt{\left(1/2+\eta_t\right)^2+2(1-\alpha)/(\sigma_t^2k_t)}\right)\label{eq:power}	\,\;.
		\end{align} We point out that expression (\ref{eq:power}) is inside the integral and depends on the integration variable $s\geq 0$. If these two expressions, (\ref{eq:exponential}) and (\ref{eq:power}), are non decreasing w.r.t. $t$ for any $t$, $x$ and $s\geq0$ then the jump measure is non decreasing. Expression (6) is non decreasing because $g_1$ and $g_2$ are non decreasing by hypothesis 1. 
		Hypothesis 1 on $g_1$ and $g_2$ also implied that the squared root in (\ref{eq:power}) is non increasing for any $t$ and then, because condition 1 on $g_3$ holds, the prefactor $\frac{ t^{1/\alpha}\sigma^2_t}{k_t^{(1-\alpha)/\alpha}}$ is non decreasing (even multiplied by $s/|x|$).
		Thus, (\ref{eq:power}) is non decreasing for any $t$, $x$ and $s\geq0$ because it is the sum of a non decreasing function and $g_3$, a non decreasing function by hypothesis. This proves that $\nu_t(x)$ is non decreasing in $t$.\\
		\bigskip\\
		Second, we prove that $\lim_{t\to 0}\nu_t(x)=0$ for $x\neq 0$; this is equivalent to demonstrate that   (\ref{eq:exponential}) or  (\ref{eq:power}) go to zero as $t$ goes to zero.
		We show that this happens in all possible cases. We first consider the case where
		\begin{equation}
		\label{eq:limit}
		\lim_{t\to0}k_t >0 \;\;\;\;\text{and}\;\;\;\;	\lim_{t\to0}|(1/2+\eta_t)| <\infty\;\;.
		\end{equation}
		In this case is evident that expression (\ref{eq:power}) goes to zero for small $t$. Otherwise, when (\ref{eq:limit}) is not true, we have to distinguish two further cases depending on whether \begin{equation}\label{eq:limit2}
		\lim_{t\to0}(1/2+\eta_t)\,k_t\sigma_t^2=0
		\end{equation}
		holds. If (\ref{eq:limit2}) is true expression (\ref{eq:exponential}) goes to zero, otherwise, because of condition 2  on $t\,\eta_t^\alpha\,\sigma_t^{2\alpha}/k_t^{1-\alpha}$, expression (\ref{eq:power}) goes to zero.
		This proves that $\lim_{t\to0}\nu_t(x)=0$ for any $x\neq 0$.\\
		
		We can now check whether the triplet satisfies the conditions in 	\citet[][ Th.9.8, p.52]{Sato}.

		\begin{enumerate}
			\item The triplet has no diffusion term.
			\item ${	{ \nu}}_t$ is not decreasing in $t$.
			\item The continuity of ${ 	{\nu}}_t(B)$ and $\gamma_t$ , where $B\in\mathbb{B}\left(\mathbb{R}^+\right)$ and $B \subset \{ x : |x|>\epsilon >0 \}$, is obvious for $t>0$: it is a natural consequence of the composition of continuous functions.
			For $t=0$ we have to prove that the limits of
			${ 	{\nu}}_t(B)$ and ${ \gamma}_t$ are 0. We have already proven that 	${ {\nu}}_t(x)$ is  non decreasing in $t$ and that $\lim_{t_\to 0}\nu_t(x)=0$, $\forall x \neq 0$. The convergence of ${ 	{\nu}}_t(B)$ to 0 is due to the dominated convergence theorem. The convergence of ${ \gamma}_t$ is because of \textbf{Lemma \ref{lemma:gamma_t}} and because $t\varphi_t$ goes to zero by definition of ATS. 
			\qed
		\end{enumerate}

		\paragraph{Proof of Proposition \ref{th6}}$ $
		
		A forward contract, valued in $t$ with delivery in $T$, is
		$
		F_t\left(T\right)=F_0\left(T\right) \; e^{f_t} 
		$, also for an ATS, as in (\ref{eq:FwdBasic}) for the LTS. 
		
		\smallskip 
		
		Let us prove the sufficient condition.
		If the forward is martingale
		\begin{equation*}
		\mathbb{E}\left[F_t(T)\middle \vert \mathcal{F}_0 \right]=F_0\left(T\right)\;\; .
		\end{equation*}
		This is equivalent to impose that
		\begin{equation}
		\mathbb{E}\left[e^{f_t}\middle \vert \mathcal{F}_0\right] = 1\; ,
		\label{eq:MartCond}
		\end{equation}
		that is, the characteristic function of $f_t$ computed in $-i$ is equal to one.
		From equation (\ref{laplace})
		\begin{equation}
		\mathbb{E}[e^{f_t} \vert \mathcal{F}_0 ]={\cal L}_t\left(\left(\eta_t+\frac{1}{2}\right)\sigma_t^2-\frac{\sigma_t^2}{2};\;k_t,\;\alpha\right)e^{\varphi_t t}={\cal L}_t\left(\sigma_t^2\eta_t;\;k_t,\;\alpha\right)e^{\varphi_t t}\;\;.
		\label{eq:MartCond2}
		\end{equation} 
		Imposing the condition (\ref{eq:MartCond}), we get $\varphi_t$.
		
		\smallskip 
		
		Let us prove the necessary condition in two steps.
		First,  given $\varphi_t$ by equation (\ref{eq:drift}) we prove that $\mathbb{E}[e^{f_t} \vert \mathcal{F}_0 ]=1, \forall t\geq0$.
		This fact is a consequence of equation (\ref{eq:MartCond2}).
		
		Second, we check the martingale condition; that is,  $\forall s,t$ s.t $0\leq s\leq t$
		\[
		\mathbb{E}\left[F_t(T)\middle \vert \mathcal{F}_s \right]=F_0\left(T\right)\mathbb{E}\left[e^{f_t-f_s+f_s}\middle \vert \mathcal{F}_s \right]=e^{f_s}F_0\left(T\right)\mathbb{E}\left[e^{f_t-f_s}\right] = F_s(T) \mathbb{E}\left[e^{f_t-f_s}\right] \;\;.
		\]
		The proposition is proven once we prove that $ \mathbb{E}\left[e^{f_t-f_s}\right]=1$.
		
		This equality holds because $f_t$ is additive; that is,  process  increments are independent
		\[
		\mathbb{E}\left[e^{f_t} \vert \mathcal{F}_0 \right] = \mathbb{E}\left[e^{f_t-f_s} \vert \mathcal{F}_0 \right] \; \mathbb{E}\left[e^{f_s} \vert \mathcal{F}_0 \right] \; ,
		\]
		then
		\[\pushQED{\qed} 
		\mathbb{E}\left[e^{f_t-f_s}\right] = \mathbb{E}\left[e^{f_t-f_s} \vert \mathcal{F}_0 \right]=\frac{\mathbb{E}\left[e^{f_t} \vert \mathcal{F}_0 \right]}{\mathbb{E}\left[e^{f_s} \vert \mathcal{F}_0 \right]} =1    \qedhere  \popQED 
		\]

		\paragraph{Proof of Theorem \ref{theorem:semplified_f}} $ $
		
		We check that the power-law sub-case of ATS satisfies the two conditions of \textbf{Theorem \ref{theorem:f_Additive}}.
		
		First, we verify that $g_1(t)$, $g_2(t)$, and $g_3(t)$ are non decreasing.
		\begin{enumerate}
			
			\item      \begin{equation*}
			g_1(t)=(1/2+\bar{\eta} t^\delta)-\sqrt{\left(1/2+\bar{\eta}t^\delta\right)^2+2\bar{\sigma}^2(1-\alpha)/(\bar{k}t^\beta)} \label{equation:exp4}
			\end{equation*}
			
			is non decreasing because its derivative w.r.t. $t$ is always greater or equal than zero for any $t\geq 0$.
			
			\begin{align*}
			\frac{d}{dt}\left((1/2+\bar{\eta} t^\delta)-\sqrt{\left(1/2+\bar{\eta}t^\delta\right)^2+\frac{2(1-\alpha)t^{-\beta}}{\bar{\sigma}^2\bar{k}}}\right)&\geq 0\\
			\frac{1-\alpha}{2\bar{\sigma}^2\bar{k}}\left(\frac{\beta t^{-\delta-\beta/2}}{\bar{\eta}\delta}\right)^2-\frac{\beta t^{-\delta}}{2\bar{\eta}\delta}-\frac{\beta}{\delta}&\geq  1\;\;.
			\end{align*}
			The last inequality is verified for any $t$ if and only if $\beta\geq -\delta$. The inequality holds due to the hypotheses   $\delta\leq 0$ and  $\beta> -\delta$.
			\item  \begin{equation*}
			g_2(t)=-(1/2+\bar{\eta} t^\delta)-\sqrt{\left(1/2+\bar{\eta}t^\delta\right)^2+2\bar{\sigma}^2(1-\alpha)/(\bar{k}t^\beta)}  \label{equation:exp3}
			\end{equation*} is non decreasing for any $t\geq 0$:  it is the sum of  two non decreasing functions because of the conditions $\beta\geq 0$ and $\delta\leq 0$.
			\item  \begin{equation*}
			g_3(t)=\frac{\sqrt{\bar{\sigma}^4t^{2/\alpha-2\beta(1-\alpha)/\alpha}\left(1/2+\bar{\eta} t^\delta\right)^2+2t^{-\beta+2/\alpha-2\beta(1-\alpha)/\alpha}\bar{\sigma}^2(1-\alpha)/(\bar{k})}}{\bar{k}^{(1-\alpha)/\alpha}} \label{equation:exp1}
			\end{equation*}
			is non decreasing for any $t\geq 0$: it is the sum of three non decreasing functions of $t$ (positive powers) elevated to a positive power because of the conditions $\beta\leq\frac{1}{1-\alpha/2}$ and $\delta> \frac{\beta(1-\alpha)-1}{\alpha}$.
			
		\end{enumerate}
		Second, we verify that $t\,\sigma^2_t\,\eta_t$ and $t\,\sigma_t^{2\,\alpha}\,\eta_t^\alpha/k_t^{1-\alpha}$  go to zero. The expressions $t^{1+\delta}\,\bar{ \sigma}^2\, \bar{\eta}$ and $t^{1+\delta\alpha-\beta(1-\alpha)}\,\bar{\sigma}^{2\alpha}\,\bar{ \eta}^\alpha/\bar{k}^{1-\alpha}$ go to zero as $t$ goes to zero because of the conditions  $\delta>-\min\left(\beta,\frac{1-\beta(1-\alpha)}{\alpha}\right)$ and $\beta\leq \frac{1}{1-\alpha/2}$	\qed

		\paragraph{Proof of Proposition \ref{theorem:NewAdditive}} $ $

		We prove the thesis using the definition of additive process \citep[][Def.14.1 p.455]{Cont}.
		\begin{enumerate}
			\item By hypothesis $r_0=0$ and by definition of additive process $X_0=0$ almost surely. Thus, $X_{r_0}=0$ almost surely.
			\item Independence of increments follows from the monotonicity of $r_t$.
			\item Stochastic continuity w.r.t. time  follows from stochastic continuity of the additive process and continuity of the function $r_t$ \qed
			
		\end{enumerate}

		\section{ Parameter estimation}
		
		In physics and engineering, all measurements are subject to some uncertainties or ``errors". 
		Error analysis is a vital part of any quantitative study \citep[see, e.g.][]{Taylor1997}. In this appendix, we estimate pricing errors and ``propagate" them to model parameters. This is a crucial passage to verify the quality of the proposed model.
		
		\bigskip
		
		First, we estimate pricing errors.
		In finance, the idea of considering the bid-ask spread in market prices as a sort of measurement error of ``true'' prices is well known and 
		goes back to the seminal paper of \citet{roll1984simple}. 
		He considers the price $y=y^*+q(y_{ask}-y_{bid})/2$, where $y$ is the observed price, $y^*$ the unobserved true price, and $q$ a binomial r.v. that takes value in $\{-1,1\}$ with equal probability, where $-1$ corresponds to the bid price and $+1$ to the ask price.
		Modeling the uncertainty with such a distribution, the relation between bid-ask spread and price standard deviation $\Sigma_y$ 
		is $\Sigma_y= (y_{ask} - y_{bid})/2$. 
		More recently, \citet{george1991estimation}  propose an extended formulation of the price $y=y^*+\pi q(y_{ask}-y_{bid})/2$, where $\pi$ is the unobserved proportion of the spread due to the so-called order processing cost; 
		$\pi$ is estimated from market data as a value 0.8 and in all cases analyzed in \citet{george1991estimation} is observed a value greater than 0.5.
		Conservatively, $\pi$ can be chosen as 0.5, obtaining the relation $\Sigma_y= (y_{ask} - y_{bid})/4$.
		
		Another possibility, in the plain vanilla option market for equity indices that we consider in this study, 
		is to model the true price $y$ as a Gaussian random variable with a mean equal to the mid-market price $(y_{ask} + y_{bid})/2$ and bid and ask prices chosen as symmetric quantiles. This represents more closely what is observed in this derivative market. 
		On the one hand, it is standard for a market player to pass through an options broker to work the order. Generally, real trades are closer to the mid-market than to bid/ask prices \citep[see, e.g.][]{petersen1994posted}. 
		On the other hand, it is not sure that a market player trades within the bid-ask spread. 
		In some rare cases, a trade can take place at a price higher (lower) than the ask (bid) price: 
		it can happen because the bid-ask enlarges due to sudden movements in the underlying or in presence of a very large trade, such as the hedging of a large exposure. 
		It is rather difficult to estimate how rare these events are. 
		They can happen roughly around the $5\%$ of the cases (i.e. $y_{ask} - y_{bid} \simeq 2 \times 1.96 \; \Sigma_y$).\\
		For this reason, in this paper, we consider the measurement error in prices as Gaussian and related to the bid-ask spread via 
		$\Sigma_y = (y_{ask} - y_{bid})/4$. 
		With this choice,  the relation between prices standard deviation and the bid-ask spread is equal to the one obtained by  \citet{george1991estimation}.
		
		\bigskip
		
		Second, we ``propagate''  to model parameters this measurement error in prices.
		In applied statistics, the propagation of uncertainties is a standard technique \citep[see, e.g.][]{Taylor1997, ryan2008modern}. We briefly recall some main results present in the literature for the models (\ref{eq:linear1}), (\ref{eq:linear2}) and (\ref{eq:linear3}) considered; then we describe the calibration procedure adopted in the paper.
		
		\smallskip
		
		Consider the linear model
		\begin{equation}
		y=Zg+\epsilon\;\;, \label{eq:linear1}
		\end{equation}  where $y\in \mathbb{R}^n$ is the response vector, $Z\in \mathbb{R}^{n\times(r+1)}$ is 
		the explanatory variables matrix, $\epsilon\sim N_n\left(0,\Sigma\right)$, 
		$\Sigma\in \mathbb{R}^{n\times n}$ is the diagonal response vector variance-covariance matrix, $g\in \mathbb{R}^{r+1}$ is the unobserved coefficient vector. We indicate with $N_n\left(\mu,\Sigma\right)$ an n-dimensional Gaussian distribution with mean $\mu$ and variance $\Sigma$.
		We perform a weighted linear regression with weights $W\in \mathbb{R}^{n\times n}$, a diagonal matrix. The least square solution is 
		\[\hat{g}=\left(Z'W Z\right)^{-1}Z'W Y\;\;,\] 
		where $Y\in \mathbb{R}^n$ is the observed response vector \citep[see, e.g.][Ch.3, pp.115-116]{ryan2008modern}.
		Thus, $\hat{g}$ is the Gaussian linear combination of Gaussian random variables: 
		\begin{equation}
		\hat{g}\sim N_{r+1}\left(g,\left(Z'WZ\right)^{-1} Z'W\Sigma WZ'(Z'WZ)^{-1}\right) \;\; . 
		\label{eq:linear param}
		\end{equation}
		
		In the weighted non-linear regression case, it is possible to obtain a similar result  \citep[see, e.g.][Ch.2, pp.21-24]{seber1989nonlinear}. 
		Consider the model
		\begin{equation}
		y_i=f(g,z_i)+\epsilon_i \label{eq:linear2}
		\end{equation}
		where $y_i$ is the $i^{th}$ component of the response vector $y\in \mathbb{R}^n$, $\epsilon_i$ is the $i^{th}$ component of the error vector $\epsilon\sim N_n\left(0,\Sigma\right)$,  $z_i$ is the 
		$i^{th}$  row of the explanatory variables' matrix. 
		Similarly,
		the coefficients of a non-linear regression are:
		\begin{equation}
		\hat{g}\sim N_{r+1}\left(g,\left(F'WF\right)^{-1}F'W\Sigma W'F(F'WF)^{-1}\right)\;\;,
		\label{eqmatr}
		\end{equation}
		where $F\in \mathbb{R}^{n\times(r+1)}$  is s.t. its $(i,j)$ element is 
		\[
		F^{i,j}=\frac{\partial f}{\partial g_j}\vert_{g,z_i}\;\;
		\]
		and $g_j$ is the $j^{th}$ component of $g$.
		
		\bigskip
		
		In the literature, the case that takes into account Gaussian correlated errors on both the response vector and the explanatory variables is available for the fitting of a straight line 
		\citep[see, e.g.][]{york1968least}.
		Consider the model \begin{equation}
		y_i=a+b(z_i+\epsilon_{z_i})+\epsilon_{y_i}\;\;, \label{eq:linear3}
		\end{equation}
		with $y_i$ and $z_i$ subjected to Gaussian errors with variance $\Sigma_{z_i}$ and $\Sigma_{y_i}$ and covariance $\Sigma_{z_i,y_i}$.
		The estimated slope and intercept  $\hat{a}$ and $\hat{b}$ can be obtained through a fast iterative procedure.
		In the first order approximation 
		\begin{align}\label{eq:eqlin}\begin{split}
		\hat{a}& \sim N\left(a,\Sigma_a\right)\;\;\\
		\hat{b} &\sim N\left(b,\Sigma_b\right)\;\;,
		\end{split}
		\end{align} 
		where the expressions of $\Sigma_a$ and $\Sigma_b$ are reported in \citet[$1^{st}$ equation in p.324]{york1968least}.
		
		\bigskip
		
		In this paper,
		the calibration procedure is divided into two steps. 
		
		First, for a given maturity $T$,
		we deal with the non-linear problem  and we calibrate from market data the three time-dependent parameters $k_T$, $\sigma_T$ and $\eta_T$
		on options with different strikes.
		The distribution of the estimated parameters can be obtained using equation (\ref{eqmatr}). 
		We construct $\Sigma$ through all observed bid and ask prices at the given maturity: the diagonal value is equal to ${\left(y_{ask}- y_{bid}\right)^2}/{16}$. 
		The matrix of weights $W$, as standard in the option market, is chosen as the identity matrix because the bid-ask spread does not differ significantly in the market prices in the calibration dataset. 
		Consequently, the calibration results of different models can be easily compared as shown in Section {\bf \ref{section:calibration}}, 
		where we compare ATS with LTS and Sato models.
		As result of this step, we obtain a variance-covariance matrix $\Sigma_T\in \mathbb{R}^{3\times3}$ of the estimated parameters $(k_T,\sigma_T^2,\eta_T)$ for every maturity $T$. 
		
		Then, to estimate the scaling parameters of model (\ref{eq:scaling}), we rewrite the parameters  
		definition w.r.t. $\theta:=T \sigma^2_T$ in log-log scale as
		\begin{align*}
		\ln {\hat k}_\theta&=\ln \bar{k}+\beta \ln\theta\\
		\ln {\hat \eta}_\theta&=\ln \bar{\eta}+\delta \ln\theta \;\;.
		\end{align*}
		The estimated variance and covariance of $	\ln {\hat k}_\theta$, $\ln {\hat \eta}_\theta$ and $ \ln\theta$ are obtained by a first-order expansion
		\[\begin{array}{ll}
		\left\{
		\begin{array}{lcl}
		Var\left(	\ln {\hat k}_\theta \right)&=& \displaystyle \frac{\Sigma_{T}^{1,1}}{k_T^2} +\frac{\Sigma_{T}^{2,2}}{\sigma_T^4}+2\frac{\Sigma_{T}^{2,1}}{k_T\sigma_T^2} \\[4mm]
		Var\left( 	\ln {\hat \eta}_\theta\right)&=& \displaystyle \frac{\Sigma_{T}^{3,3}}{\eta_T^2}\\[4mm]
		Var\left(	\ln \theta \right)&=& \displaystyle \frac{\Sigma_{T}^{2,2}}{\sigma_T^4} 
		\end{array}
		\right.

		&\left\{\begin{array}{lcl}
		Cov\left(	\ln {\hat k}_\theta, 	\ln \theta\right)&=& \displaystyle \frac{\Sigma_{T}^{2,2}}{\sigma_T^4}+\frac{\Sigma_{T}^{2,1}}{k_T\sigma_T^2} \\[4mm]
		Cov\left(	\ln {\hat \eta}_\theta, \ln\theta \right)&=&\displaystyle \frac{\Sigma_{T}^{2,3}}{\eta_T\sigma_T^2} 
		\end{array}\;\;,\right.
		\end{array}
		\] 
		where T is the maturity corresponding to the $\theta$ of interest. 
		The distributions of the estimated parameters $\beta$, $\delta$, $\bar{k}$, and $\bar{ \eta}$ are the one identified in equation (\ref{eq:eqlin}). The weights selected in the minimization procedure 
		\citep[see][equation (1), p.320]{york1968least} 
		are $1/ Var (	\ln {\hat k}_\theta )$ in the regression on  $\ln {\hat k}_\theta$ and $1/	Var( 	\ln {\hat \eta}_\theta)$ in the regression on  
		$\ln \hat{ \eta}_\theta$. The weights of the explanatory variable  $\ln \theta$ are $1/Var(\ln \theta)$.
		
		\smallskip
		
		Finally, from the confidence intervals for 
		$k_T$ and $\eta_T$ we can also get the confidence intervals of the skewness and the excess kurtosis at a given maturity.
		We are able to obtain skewness and excess kurtosis of ATS thanks to the identity in law with LTS \citep[for the moments of LTS see, e.g.][p.129]{Cont}.
		The linear regression of these two higher moments, 
		w.r.t. the squared root of time, is realized by computing the Gaussian errors (\ref{eq:linear param}), in the first-order approximations, of skewness and excess kurtosis.
		
	\end{appendices}
\end{document}